\definecolor{MyPurple}{RGB}{111,0,255}
\newtheorem{theorem}{Theorem}[section]
\newtheorem*{theorem*}{Theorem}
\newtheorem{proposition}[theorem]{Proposition}
\newtheorem*{remark*}{Remark}
\newtheorem{lemma}[theorem]{Lemma}
\newcommand{\be}{\boldsymbol e}
\newcommand{\Acal}{\mathcal{A}}
\newcommand{\Lcal}{\mathcal{L}}
\newcommand{\Ocal}{\mathcal{O}}
\newcommand{\Pcal}{\mathcal{P}}
\newcommand{\Tcal}{\mathcal{T}}
\newcommand{\Vcal}{\mathcal{V}}
\newcommand{\Nbb}{\mathbb{N}}
\newcommand{\Rbb}{\mathbb{R}}
\newcommand{\R}{\mathbb{R}}
\newcommand{\bx}{{\boldsymbol x}}
\newcommand{\blambda}{{\boldsymbol \lambda}}
\newcommand{\bgamma}{{\boldsymbol \gamma}}
\newcommand{\bxi}{{\boldsymbol \xi}}
\newcommand{\one}{\mathbbm{1}}
\newcommand{\OT}{\text{OT}}
\newcommand{\Cost}{{\normalfont \texttt{Cost}}}
\DeclareMathOperator*{\argmin}{arg\,min}
\title{Unweighted Layered Graph Traversal:\\Passing a Crown via Entropy Maximization}
\author{Xingjian Bai \\ MIT \and Christian Coester \\ University of Oxford \and Romain Cosson \\ Inria, Paris}
\begin{document}

\date{}
\maketitle

\begin{abstract}
Introduced by Papadimitriou and Yannakakis in 1989, layered graph traversal is a central problem in online algorithms and mobile computing that has been studied for several decades, and which now is essentially resolved in its original formulation. In this paper, we demonstrate that what appears to be an innocuous modification of the problem actually leads to a drastic (exponential) reduction of the competitive ratio. Specifically, we present an algorithm that is $\Ocal(\log^2 w)$-competitive for traversing unweighted layered graphs of width $w$. Our algorithm chooses the agent’s position simply according to the probability distribution over the current layer that maximizes the sum of entropies of the induced distributions in the preceding layers.
\end{abstract}


\thispagestyle{empty}
\clearpage
\pagenumbering{arabic}

\section{Introduction}
Exploring an unknown environment with a mobile agent is a fundamental task in robotics and artificial intelligence \cite{papadimitriou1991shortest}. Despite its importance, very few models allow for rigorous worst-case analysis, especially when the environment is a general graph (or metric space). One successful model introduced in the field of online algorithms is called `layered graph traversal', and it has been studied extensively since the 1990s \cite{papadimitriou1991shortest,chrobak1993metrical,ramesh1995traversing,burley1996traversing,fiat1998competitive,bubeck2022shortest,bubeck2023randomized, banerjee2023graph,cosson2024collective}. In this paper, we show that a simple and natural assumption on the problem -- specifically, that the unknown graph is unweighted -- drastically reduces its competitive ratio. 

\paragraph{Problem setting.} In layered graph traversal, a mobile agent starts from some node -- called the source -- of an unknown weighted graph $G=(V, E)$, and is tasked to reach some other node -- called the target. The graph is divided into ``layers'', where the $i$-th layer refers to the set of nodes at combinatorial depth $i$ (i.e. $i$ hops away from the source).  The agent is short-sighted in the sense that it only gets to see layer $i+1$ once it is located at layer $i$. Luckily, the agent is broad-sighted, in the sense that it sees all nodes and edges going from layer $i$ to layer $i + 1$ when it reaches layer $i$. The problem is parameterized by the \textit{width} $w$ of the graph, defined as the maximum number of nodes in a layer. 
The cost of the agent is defined as the total distance traveled until reaching the target. A deterministic (resp. randomized) algorithm for layered graph traversal is said to be $c_w$-competitive if its cost (resp. expected cost) is at most $c_w$ times the length of the shortest path from source to target. The deterministic competitive ratio of layered graph traversal is known to lie between $\Omega(2^w)$ \cite{fiat1998competitive} and $\Ocal(w 2^w)$ \cite{burley1996traversing} and its randomized counterpart was settled to $\Theta(w^2)$ in \cite{bubeck2022shortest,bubeck2023randomized}. It is known~\cite{fiat1998competitive} that layered graph traversal is equivalent to the special case where the graph is a tree and all edge lengths are either $0$ or $1$.

In the unweighted variant (where all edge lengths are exactly 1), a depth-first search approach (DFS) is clearly $2w-1$-competitive (see Appendix \ref{ap:simpleResults}). Despite the broad interest that layered graph traversal received, no randomized algorithm was shown to improve over this bound until \cite{cosson2024collective} recently proposed a $\Ocal(\sqrt{w})$-competitive randomized algorithm in the unweighted setting.

\begin{figure}[h!]
  \centering
  
  \includegraphics[width=0.5\textwidth]{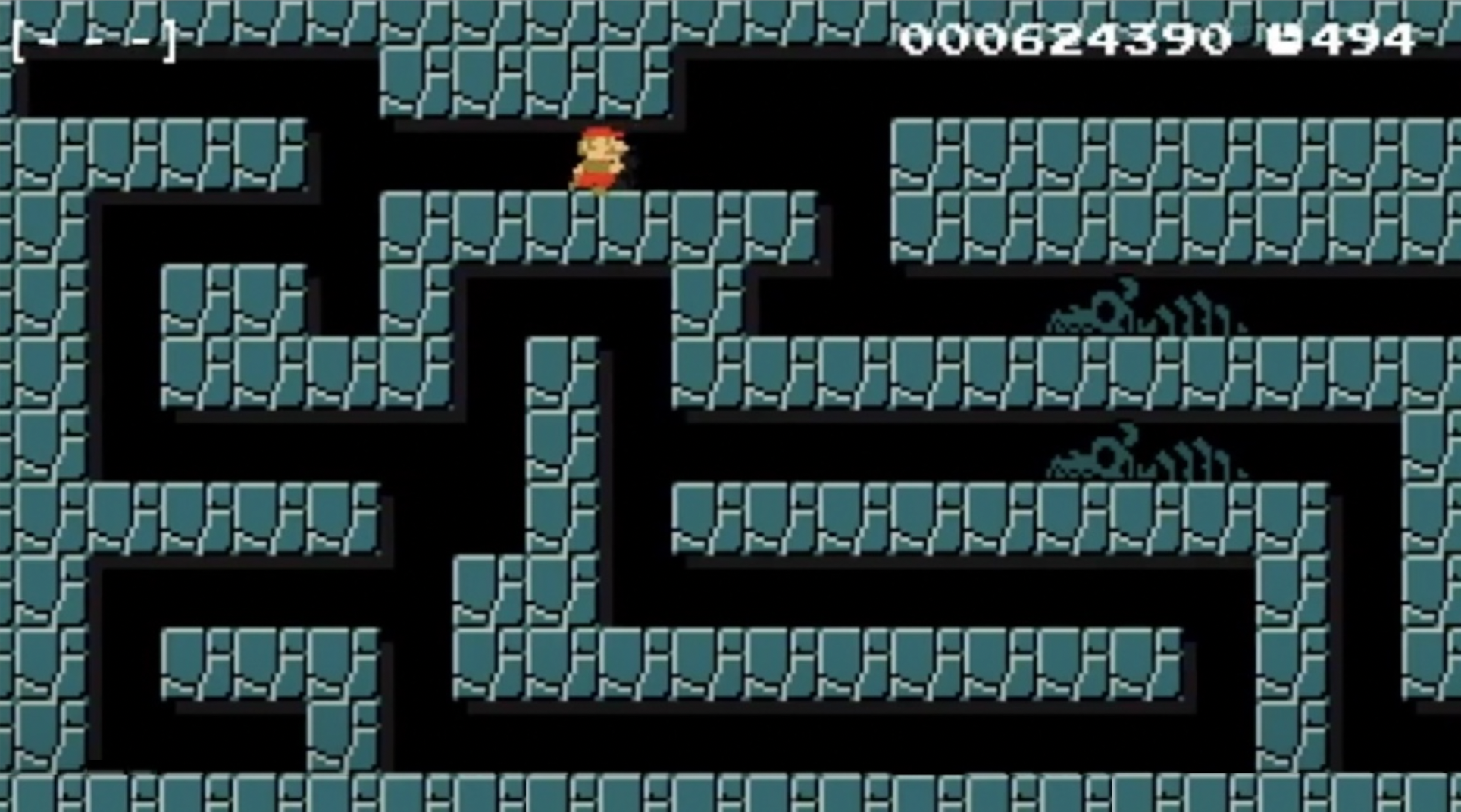}
  
  \caption{Mario is a well-known example of an agent that is short-sighted but broad-sighted. The width $w$ corresponds to the screen height, with the target located somewhere off to the right. A natural question is: faced with the unknown maze ahead, which path should Mario choose?}
  \label{fig:example}
\end{figure}

\paragraph{Our results.} In this paper, we show that the assumption that the graph is unweighted allows for an even more drastic improvement in the competitive ratio, reducing it to $\Ocal(\log^2 w)$. Our algorithm is remarkably
simple: It maintains at all times that the probability distribution of the agent's location maximizes 
an entropy defined on the revealed part of the graph. The majority of this paper is dedicated to proving this randomized upper bound, based on extending techniques from the online mirror descent framework~\cite{BubeckCLLM18,BubeckCLL21} with several new ideas. A few simple motivating results, such as tightness of $2w-1$ for deterministic algorithms (achieved by depth-first search), a randomized lower bound of $\Omega(\log w)$, and some natural but unsuccessful randomized algorithms are surveyed in Appendix~\ref{ap:simpleResults}.

\paragraph{Applications: playing Mario, passing a crown and learning-augmented algorithms.} An immediate application of unweighted layered graph traversal is illustrated in Figure \ref{fig:example}, where the famous video game character Mario is short-sighted and broad-sighted by design. 

On a more general ground, there are many situations where an agent is \emph{short-sighted in time} (because it cannot predict the future) but \emph{broad-sighted in space} (because it sees all past and current states of the world). 
For an amusing example, consider the following application: suppose a king or queen must design a rule for the passing of their crown and wishes to minimize how much it will ``travel'' between their descendants (supposedly, less movement leads to more political stability). They could decide that the leader of each new generation shall be chosen by an unweighted layered graph traversal algorithm. Note that the traditional method ``give the crown to the closest living relative of the last holder'' is equivalent to depth-first search, which has lesser competitive guarantees than the algorithm we introduce. Admittedly, the interest of today's monarchies in deviating from traditional methods might be limited.

The principle of passing a crown/baton applies to wider settings, such as combining strategies in the context of learning-augmented algorithms: \cite{AntoniadisCEPS23b} recently proposed to use a layered graph traversal algorithm to choose between $w$ expert strategies, when there is a (metric) cost associated to switching between experts. In the general setting (when the experts move at arbitrary speeds) a weighted layered graph traversal algorithm induces a competitive ratio in 
$O(w^2)$, but in settings where all expert strategies move at the same speed, an unweighted layered graph traversal algorithm could be employed, resulting in a competitive ratio in 
$O(\log^2 w)$ against the best combination of expert strategies.

\subsection{Discussion of techniques}
\paragraph{Entropy maximization.} The downside of deterministic online algorithms is that they are completely predictable. This enables strong lower bounds, as an adversary can construct the input in the way that is most detrimental to the choices of the algorithm. Therefore, the key to achieving better competitive ratios via randomization lies in making algorithms less predictable. The motivating idea of our algorithm is to take this strategy to the extreme, by aiming for the \emph{most unpredictable} configuration at all times. The unpredictability of a random decision is typically measured by its entropy, $\sum_u x_u\log \frac{1}{x_u}$, where $x_u$ denotes the probability of decision $u$.

To implement the idea of maximizing unpredictability, a first tempting idea might be to choose the probability distribution over nodes $\ell$ of the current layer that maximizes $\sum_\ell x_\ell\log \frac{1}{x_\ell}$, i.e., the uniform distribution. However, this ignores that the distribution over the current layer also induces distributions over all previous layers; for example, if all but one node of the current layer share the same ancestor in the preceding layer, then the uniform (high-entropy) distribution over the current layer would induce a very non-uniform (low-entropy) distribution over their ancestors in the preceding layer. We wish to maximize not only the entropy of the current layer, but also of their ancestors in each previous layer. Hence, our algorithm chooses the probability distribution over the current layer that maximizes the \emph{sum} of entropies over \emph{all} layers.

\paragraph{Relation to prior mirror descent based online algorithms.} Entropy-driven online algorithms have led to major advances recently~\cite{BuchbinderCN14,BubeckCLLM18,BuchbinderGMN19,BubeckCLL21,CoesterL22,BansalC22,bubeck2022shortest}. Unlike our algorithm, these prior algorithms are not expressed as straightforwardly as merely maximizing entropy, but are formulated in the framework of online mirror descent. The entropic functionals (called \emph{regularizers}) in these prior works involve a perturbation term $\delta>0$, e.g. $\Phi(x)=\sum (x_u+\delta)\log(x_u+\delta)$ in its simplest forms, and an algorithm (in the continuous-time framework of~\cite{BubeckCLLM18,BubeckCLL21}) is then described by a differential inclusion
\begin{align*}
    \nabla^2 \Phi(\bx)\dot\bx -\gamma  \in -N_{K},
\end{align*}
where $\dot\bx$ is the time derivative of configuration $\bx$, $\gamma$ is the ``desired direction of movement'', and $N_K$ is the normal cone of the polytope $K$ of configurations. The interpretation of $\Phi$ in these cases is that it replaces the Euclidean geometry (which would correspond to the case where $\nabla^2\Phi(x)$ is replaced by the identity matrix, as in gradient descent) by a geometry adapted to the uncertainty of the underlying problem. Although our algorithm also satisfies such an inclusion, it can be formulated even more simply as
\begin{align*}
    x(i) = \argmin_{x\in K}\Phi_i(x),
\end{align*}
where $\Phi_i(x)$ denotes the sum of negative entropies of all layers visible at time $i$. Unlike previous mirror descent based online algorithms, this formulation in fact yields a \emph{memoryless} algorithm since the selected $\Phi_i$ only depends on the tree distances between the nodes of the current layer. This also allows us to obtain an alternative expression of our algorithm through elementary operations (see Lemma \ref{lem:explicitAlgo}).

\paragraph{Achieving $\boldsymbol{\Ocal(\log^2w)}$.} The analysis of our algorithm is split into deactivation phases (where nodes that become irrelevant are deactivated) and growth phases (where new nodes are added), which at a high level resemble similar phases previously considered in the ``evolving tree game'' of~\cite{bubeck2022shortest}. However, our methods of analysing these phases have little in common with~\cite{bubeck2022shortest} and require several novel ideas.

Our regularizer can be equivalently written as
\begin{align}
\label{eq: expression_for_phi}
    \Phi_i(x)=\sum_u x_u\log x_u = \sum_u h_{p(u)}x_u\log\frac{x_u}{x_{p(u)}},
\end{align}
where the sum is over vertices $u$ of the Steiner tree of the current layer, $x_u$ is the algorithm's probability of residing in the subtree rooted at $u$, $p(u)$ is the parent of $u$, and $h_{u}$ denotes the round number $i$ minus the combinatorial depth of $u$ in the tree, i.e. the height of $u$ at round $i$. Note that this quantity is well-defined only because the tree is unweighted, and all nodes at the last layer are at the same combinatorial depth. 
See Appendix~\ref{ap: useful-identities} for the identity between the two sums.

The right-hand side of equation~\eqref{eq: expression_for_phi} reminds us of the ``conditional entropy'' regularizer employed in~\cite{CoesterL22} for metric task systems on hierarchically separated trees. 
This expression allows us to derive an expression for the change of the conditional probabilities $\frac{x_u}{x_{p(u)}}$ in each step (which does not involve Lagrange multipliers). However, the analysis in~\cite{CoesterL22} requires that edge lengths decrease geometrically along root-to-leaf paths, and more crucially it only achieves a competitive ratio \emph{linear in the depth} of the tree. In our case, the depth of the tree (even if merging edges as in~\cite{bubeck2022shortest}) can be $\Omega(w)$, so an analysis similar to~\cite{CoesterL22} would fail to achieve a sublinear competitive ratio. The high depth of the tree is also the obstacle leading to the $\Theta(w^2)$ competitive ratio  for weighted layered graph traversal in~\cite{bubeck2022shortest}. To achieve the exponential improvement to $\Ocal(\log^2w)$, a key idea of our proof is to separate the analysis of movement cost when handling deactivation phases, where the probability mass $x_\ell$ of a leaf $\ell$ decreases to $0$, into two substeps that are analyzed differently depending on whether $x_\ell$ is smaller or greater than $1/\text{poly}(w)$. These two substeps require two different potential functions, which we combine into an overall potential function that is compatible with both substeps of deactivation phases and also with growth phases.

Another interesting aspect -- which echoes a technique in \cite{cosson2024collective} --
is that the value of the regularizer appears in the potential function, instead of the Bregman divergence with an offline configuration.
The Bregman divergence would be problematic for us since, due to the lack of $\delta$-perturbations in $\Phi_i$, its Lipschitz constant can be $\Omega(w)$, even when evaluated only at the relevant online algorithm configurations. 

\subsection{Related work} The setting of layered graph traversal was introduced in a paper titled `Shortest path without a map' by Papadimitriou and Yannakakis, motivated by applications in navigation robotics \cite{papadimitriou1991shortest}. For the case of $w=2$, they showed that the deterministic variant of the problem is essentially equivalent to linear search, a problem which dates back to \cite{bellman1963optimal} for which the competitive ratio is equal to $9$ \cite{baezayates1993searching}. The first competitive analysis for arbitrary values of $w\in \Nbb$ was obtained by \cite{fiat1998competitive}, which coined the term `layered graph traversal' to emphasize how the agent collects information (i.e., the feedback). Following a series of improvements \cite{fiat1998competitive,ramesh1995traversing,burley1996traversing}, the competitive ratio of the deterministic variant of the problem was proved to lie between $\Ocal(w2^w)$ \cite{burley1996traversing} and $\Omega(2^w)$ \cite{fiat1998competitive}. More recently, the competitive ratio of the randomized variant of the problem was resolved up to a constant factor, establishing it at $\Theta(w^2)$ \cite{bubeck2022shortest,bubeck2023randomized}. For the unweighted setting, \cite{cosson2024collective} recently provided a $\Ocal(\sqrt{w})$-competitive algorithm, leaving a substantial gap compared to the classical $\Omega(\log w)$ lower-bound. This paper significantly narrows this gap, presenting a $\Ocal(\log^2 w)$-competitive algorithm.

The problem was also studied under the name `metrical service systems' \cite{chrobak1993metrical} to highlight its connection to the setting of metrical task systems \cite{borodin1992optimal}. 
The denomination `small set chasing' was also recently suggested by \cite{bubeck2022shortest} to point out its connection with more recent works on set chasing (see, e.g. \cite{bubeck2019competitively,argue2021chasing,Sellke20}). The problem is additionally related to several other problems in online computation, such as the $k$-taxi problem~\cite{CoesterK19}, whose lower bound is based on a reduction from layered graph traversal. Very recently, layered graph traversal has also found new applications in the context of learning-augmented algorithms~\cite{AntoniadisCEPS23a,AntoniadisCEPS23b}.

\textit{Exploration of an unknown environment by a mobile agent.} In layered graph traversal, the agent's objective can be restated as follows: the agent must build a complete map of the underlying graph using the fewest moves possible. 
Graph exploration with a mobile agent has attracted attention since the 19th century \cite{lucas1882recreations,tarry1895probleme}, marked by the formal introduction of the depth-first search (DFS) as a maze-solving algorithm. 
DFS is particularly suited for exploring general unknown graphs when the agent is short-sighted and narrow-sighted: the agent traverses all $m$ edges in exactly $2m$ moves, which is optimal in the sense of the competitive ratio. A natural extension of this problem is when the agent can also see its immediate neighborhood (a little less short-sighted) \cite{kalyanasundaram1994constructing}. This setting is known as `online graph exploration' \cite{megow2012online}, and the goal is for the agent to visit all nodes while incurring a limited cost with respect to the optimal tour (i.e., the solution of the traveling salesman problem). The competitive ratio for that problem is still unsettled (it lies between $\Omega(1)$ and $\Ocal(\log n)$, where $n$ is the number of nodes in the graph) but many restricted classes of graphs admit $\Ocal(1)$-competitive algorithms, such as planar graphs or some minor-excluded graphs (see \cite{baligacs2023exploration} and its references). In contrast, layered graph traversal proposes a different model of feedback where the agent gets to see simultaneously all nodes and edges at some given hop distance from its initial position. This setting is particularly well-motivated when the graph evolves with time (e.g., a phylogenetic tree). Many other graph exploration settings have been introduced, varying the capabilities of the agent: e.g., with limited memory \cite{aleliunas1979random}, limited storage \cite{yanovski2003distributed}, or considering multiple agents working a as a team \cite{fraigniaud2006collective,cosson2023efficient,cosson2024collective}. For a recent review of graph exploration problems, see the introduction of \cite{cosson2024ariadne}. Contrarily to the aforementioned examples, this paper studies the case of a single, fully-competent agent.

\subsection{Notations and preliminaries} 
\label{sec:notations}
Consider a graph $G=(V, E)$ defined by a set of nodes $V\subset \Vcal$ and edges $E$ over $V$, where $\Vcal$ is an arbitrary countable set. 
The edges are undirected and unweighted, and the graph is connected. One node is called the root and is denoted by $r\in V$. For some $i\in \Nbb$, we denote by $\Lcal(i)$ the set of nodes at combinatorial depth $i$, i.e., $i$ hops away from the root. We will refer to $\Lcal(i)$ as the $i$-th layer of the graph. 
The width $w$ of the graph is defined as the maximum cardinality of its layers, i.e., $w = \max_{i} |\Lcal(i)|$. As standard in layered graph traversal, the goal to reach a certain target can be equivalently simplified to reaching the last layer of the graph. 

\textit{Reduction to trees.} For any node $u \in V$, we denote by $i_u$ the combinatorial depth of $u$, i.e. satisfying $u\in \Lcal(i_u)$. At any step $i\geq i_u$, we call $h_u(i) = i-i_u$ the height of $u$. For any node $u\in V\setminus \{r\}$, we can arbitrarily pick a node of the preceding layer $p(u) \in \Lcal(i_u-1)$, such that $(p(u),u)\in E$.\footnote{Observe that the choice of $p(u)$ can be made online by the agent. The reduction from graphs to tres is standard in the literature on layered graph traversal. Here, we use the assumption that any non-root node is connected to a node in the preceding layer because a layer refers to the set of nodes at the same combinatorial depth. We note that this assumption is not required in the weighted variant of the problem; see \cite{fiat1998competitive}. For more motivation on the setting of this paper, we refer to \cite{cosson2024collective}.} It is clear that traversing the tree formed by $T=\{(p(u),u)\mid u\in V\setminus\{r\}\} \subseteq E$ in an online manner is no less difficult than traversing the original graph. For this reason, we will focus the rest of the paper on the specific case where the layered graph is a tree. For any node $u\in V$, we denote by $c(u)\subset V$ the set of its children of $u$, and by $\Lcal_u(i)$ the subset of nodes of $\Lcal(i)$ that are descendants of $u$, for some $i\in \Nbb$. We also denote by $u\rightarrow r \subset V$ the set of all nodes in the path from $u$ to $r$, excluding $r$.

\textit{Layered graph traversal algorithm.} Formally, a randomized unweighted layered graph traversal algorithm is defined by a function  $\Acal: \Tcal \times \Vcal \rightarrow \Pcal(\Vcal)$ mapping a tree $T \in \Tcal$ and a node $u\in \Vcal$ in the penultimate layer of $T$ to a distribution $\bx \in \Pcal(\Vcal)$ over the set of nodes in the ultimate layer of $T$.
The algorithm is deterministic if the distribution is always supported on a unique node. The cost of the algorithm is defined as the total number of edges traversed by an agent that picks its next position using $\Acal$ at each layer and uses the shortest paths to move between positions elected in consecutive layers. The total cost of an agent until reaching layer $i$ depends on the random choices taken by the agent (if the algorithm is randomized) and we therefore focus on its expectation which we denote by $\Cost(i)$. The algorithm is $c_w$-competitive if, for any graph of width $w$ and any step $i$, it satisfies $\Cost(i)\leq c_w i$. 

\textit{Fractional perspective on unweighted layered graph traversal.} A classical equivalent perspective on layered graph traversal -- which is referred to as the fractional view -- is that the algorithm $\Acal:\Tcal \rightarrow \Pcal(\Vcal)$ maintains a distribution on the ultimate layer of the tree given as input. The movement cost charged to the algorithm is then the optimal transport cost between the two consecutive distributions. Given a fractional algorithm for layered graph traversal, one gets a randomized algorithm with the same cost by considering at each step an optimal coupling between the two consecutive distributions and sampling the next position of the agent according to the conditional distribution associated with its current position. We refer to \cite[Section 4]{bubeck2022shortest} for more details on the reduction. 

\textit{The active tree and the active polytope.}
At any round $i$, we say that a node is `active' if it has a descendant in $\Lcal(i)$ and we denote the set of all active nodes by $V(i)$. A node is `deactivated' at round $i$ if it ceases to be active at that round and the set of all deactivated nodes until round $i$ is denoted by $\bar V(i)$. We will essentially focus on the `active tree' that is formed by the nodes in $V(i)$. Observe that this tree is alternatively defined as the Steiner tree of $\Lcal(i)\cup\{r\}$.
We will denote by $\bx(i) \in \Pcal(\Lcal(i))$ the probability distribution associated with the position of the agent on the $i$-th layer. Instead of viewing $\bx$ as a probability distribution, it will be useful to consider it more generally as a point of the \textit{active polytope}, $K(i)$ that we define as,
\begin{equation*}
    K(i) = \{\bx \in \Rbb^{V(i)\setminus\{r\}}\mid \forall u \in V(i) \setminus \Lcal(i)\colon x_u = \sum_{v\in c(u)}x_v \}. 
\end{equation*}
We emphasize that $x_r$ is defined once and for all to be $1$. For that reason, it does not appear as a variable in the definition of the active polytope above (but $x_r$ does appear in one constraint, which effectively enforces $\sum_{\ell\in \Lcal(i)}x_\ell(i) =1$). We note that a probability distribution must also satisfy non-negativity constraints, which are not explicit in $K(i)$. It will be obvious that the configuration $\bx(i)\in K(i)$ defined by our algorithm satisfies the positivity constraints (with strict inequality) and can indeed be interpreted as $\bx(i) \in \Pcal(\Lcal(i))$. 
We note that $K(i)$ is an affine subspace of $\R^{V(i)\setminus\{r\}}$, 
Denoting by \(\mathbf{e}_u \in \mathbb{R}^{V(i)\setminus\{r\}}\) the canonical basis vector associated with the element \(u\in V(i)\setminus\{r\}\), 
we define the normal cone $N_{K(i)}$ associated with $K(i)$, 
\begin{align*}
    N_{K(i)} &= \text{Vect}(\{\be_{u}-\sum_{v\in c(u)}\be_v: u\in V(i) \setminus \Lcal(i) \}),\\
    &= \{\bxi\in \Rbb^{V(i) \setminus \{r\}} : \xi_u = \lambda_{p(u)}-\lambda_u  \text{ with } \blambda \in \Rbb^{V(i)} \text{ and } \lambda_\ell = 0 \text{ for } \ell\in \Lcal(i)\},
\end{align*}
which is the vector space orthogonal to the the affine subspace $K(i)$.

\textit{Optimal transport in a tree.} For two given configurations, $\bx(i) \in K(i)$ and $\bx(i + 1)\in K(i + 1)$, we denote the optimal transport cost (also known as the earth-mover distance) between $\bx(i)$ and $\bx(i + 1)$ by $\OT(\bx(i),\bx(i + 1))$. 
Extending the definition of a configuration to take value zero outside of the active polytope, the optimal transport cost on a tree is classically expressed as
\begin{equation*}
    \OT(\bx,\bx') = \sum_{u\in V}|x_u'-x_u|.
\end{equation*}
We note that this definition matches the usual definition of optimal transport on the metric space associated with the underlying tree. 

\textit{Derivatives and instantaneous movement costs.} The derivative of a differentiable real function $f(\cdot)$ with respect to a single variable $t\in \Rbb$ will be denoted by $\partial_t f(t)$. The gradient of a twice-differentiable multivariate function $\Phi(\bx)$ will be denoted by $\nabla \Phi(\bx)$, and its Hessian will be denoted by $\nabla^2 \Phi(\bx)$. For convenience, when the function that is differentiated is a configuration $\bx(t) \in K(i)$, we will adopt the physicist's notation and denote its derivative $\partial_t \bx(t)$ by $\dot \bx$. Thus, at an arbitrary node $u\in V(i)$, we will also denote $\partial_t x_u(t)$ by $\dot x_u$. In general, we will often drop dependence on time when it is clear from the context. We observe that if $\forall t: \bx(t)\in K(i)$, then $\dot \bx(t)$ is orthogonal to the normal cone $N_{K(i)}$, i.e., $\dot \bx(t)\cdot \bxi=0$ for $\bxi\in N_{K(i)}$. We define the instantaneous movement cost at instant $t$ as 
\begin{equation}
    \partial_t \Cost(t) = \sum_{u\in V(i)}|\dot x_u|.\label{def: instant-movement}
\end{equation}
By the triangle inequality, it is clear that for any two configurations $\bx\in K(i)$ and $\bx'\in K(i)$, if $\bx(t)$ is a differentiable function with respect to $t$, interpolating between $\bx(a)=\bx$ and $\bx(b)=\bx'$ for $a<b$, we have $\OT(\bx,\bx') \leq \int_a^b \partial_t \Cost(t)dt$. Overloading notation, we will use letter $i$ to refer to the actual configurations $\bx(i)$ taken by our algorithm at discrete time steps, and letter $t$ to denote continuously evolving configurations $\bx(t)$ interpolating between two consecutive configurations.

\textit{Convexity.} For some twice differentiable convex function $\Phi$ defined on an open domain $U$, we denote the Bregman divergence associated to $\Phi$ by $D(\cdot,\cdot)$. It is defined for any $\bx, \bx' \in U^2$ as, $D(\bx',\bx) = \Phi(\bx')-\Phi(\bx) -\nabla \Phi(\bx) (\bx'-\bx)$, and it is always non-negative.

\section{Algorithm and analysis}
In this section, we present and analyze our algorithm for layered graph traversal, which is stated here in its fractional formulation (see Section~\ref{sec:notations} for definition). When reaching layer $i\in\Nbb$, the algorithm chooses the configuration $\bx(i)$ that is the minimizer of the following expression, 
\begin{equation}
\label{alg:definition}
    \bx(i) \in \argmin_{\bx \in K(i)} \sum_{u\in V} x_u \log x_u,
\end{equation}
which 
we denote by $\Phi_i$.\footnote{We note that $\Phi_i$ is defined on the open domain $U = \Rbb_{++}^{V(i)}$ and that the minimizer is effectively taken on  $K(i) \cap U$. Such minimizer exists due to the property that $\partial_{x_u}\phi(\bx) \rightarrow -\infty$ when $x_u\rightarrow 0$. It is also unique, by strong convexity of $\Phi_i$.} 
We observe that $\Phi_i$ can equivalently be rewritten as
\begin{equation*}
\Phi_i(\bx)  = \sum_{u\in V(i)} x_u \log x_u = \sum_{u\in V(i)\setminus \{r\}}h_{p(u)}(i)x_u\log\frac{x_u}{x_{p(u)}}, 
\end{equation*}
where $h_{p(u)}(i)$ denotes the height of $p(u)$ at round $i$. See Appendix \ref{ap: useful-identities} for details. The rest of this section is devoted to proving the following theorem, which is the main result of the paper. 
\begin{theorem} The fractional algorithm defined by \eqref{alg:definition} is $\Ocal(\log^2 w)$-competitive for unweighted layered graph traversal. In other words, the associated randomized algorithm satisfies for any unweighted graph of width $w$ that its expected cost satisfies at all rounds $i$, 
$$\Cost(i) \leq \Ocal(\log^2 w)\cdot i.$$
\end{theorem}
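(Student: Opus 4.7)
The plan is to bound the algorithm's movement via an amortized argument against a potential function built from $-\Phi_t(\bx(t))$, decomposing every layer transition into a \emph{dead-end phase} followed by a \emph{growth phase}. During the dead-end phase I continuously drive the mass $x_\ell$ of every leaf $\ell$ at layer $t$ that has no descendant in layer $t+1$ down to zero (one leaf at a time), keeping $\bx$ equal to the minimizer of $\Phi_t$ subject to the current value of $x_\ell$ and the constraints of $K(t)$. In the growth phase, the newly revealed children are attached to the surviving leaves and $\bx$ is updated to the minimizer of $\Phi_{t+1}$ on $K(t+1)$. Movement cost is integrated using~\eqref{def: instant-movement} along these parametric paths, and the KKT equations for the entropic minimizer (equivalently Lemma~\ref{lem:explicitAlgo}) give the closed form of $\bx(s)$ on which all derivatives are evaluated.

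The core technical obstacle lies in the dead-end analysis, because the Hessian of $\Phi_t$ has entries of order $1/x_\ell$ and a straightforward Bregman-divergence bound would only give $\Ocal(w)$-competitiveness. Following the hint in the overview, I split the interval $x_\ell\in[0, x_\ell^{\mathrm{init}}]$ into a \emph{large-mass regime} $x_\ell\geq 1/\mathrm{poly}(w)$ and a \emph{small-mass regime} $x_\ell<1/\mathrm{poly}(w)$. In the large-mass regime I would prove a mirror-descent-style inequality showing that the instantaneous movement cost is $\Ocal(\log w)$ times $\partial_s(-\Phi_t(\bx(s)))$, charging the phase's cost to the potential drop. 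In the small-mass regime the total mass being redistributed is at most $1/\mathrm{poly}(w)$ and the active tree has at most $w$ leaves, so a direct combinatorial bound on $\sum_u |\dot x_u|$, aggregated over the history of all dead-ends ever encountered, contributes at most $\Ocal(\log^2 w)\cdot t$ in total.

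For the growth phase, when a leaf $u$ at layer $t$ spawns children $v_1,\ldots,v_k$ and the minimizer of $\Phi_{t+1}$ allocates $x_u$ among them, I would use the KKT conditions to prove sibling-subtree comparisons of the form $x_{v_i}/x_{v_j} \lesssim \mathrm{poly}(|\Lcal_{v_i}(t+1)|/|\Lcal_{v_j}(t+1)|)$. Since nodes in layer $t+1$ with least common ancestor $u$ are at distance $2h_u(t+1)$, this comparison together with the conditional-entropy form $\Phi_t(\bx)=\sum_{u\neq r}h_{p(u)}x_u\log(x_u/x_{p(u)})$ controls the optimal transport cost by a height-weighted sum that is again either absorbed into the potential or charged directly to $\Ocal(\log^2 w)$ per layer.

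Combining both bounds with the amortization $\Psi(t)=-\alpha\,\Phi_t(\bx(t))$ for a suitable $\alpha=\Theta(\log w)$ and telescoping yields $\Ebb(\Cost(t))\leq \Psi(0)-\Psi(t)+\Ocal(\log^2 w)\cdot t$; a separate estimate on $|\Phi_t(\bx(t))|$ from its conditional-entropy expression keeps the boundary terms within $\Ocal(\log^2 w)\cdot t$, giving the theorem. I expect the hardest step to be matching the two regimes of the dead-end analysis so that they tile cleanly at the threshold $x_\ell=1/\mathrm{poly}(w)$: this is the ingredient that distinguishes our treatment from the conditional-entropy regularizer of~\cite{CoesterL22}, which relied on the tree depth being at most logarithmic in $w$, whereas our trees may have depth $\Omega(w)$.
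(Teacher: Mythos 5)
Your proposal takes essentially the same route as the paper: a potential of the form $-\Theta(\log w)\cdot\Phi_t(\bx(t))$, a decomposition of each layer transition into a dead-end deactivation phase and a growth phase, a two-regime analysis of the dead-end phase split at a $1/\mathrm{poly}(w)$ mass threshold (charging the large-mass regime to the potential drop and the small-mass regime to a counting term over deactivated edges), and sibling-subtree mass comparisons derived from the explicit KKT form of the minimizer to control the growth-phase movement. The paper's proof fills in what you leave as forward-looking plans (in particular the inductive lemma bounding $\sum_p \frac{x_p^+x_p^-}{x_p}\log(w_p^+w_p^-)\le x_u(1+\log w_u)^2$ and the concrete threshold $1/w^2$), but the structure and the key ideas match.
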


\textit{Proof sketch.}
For any layer $i\in \Nbb$, we define the potential
\begin{equation*}
    P(i) = \frac{4}{w}|\bar V(i)| + 4\log w \cdot  \Phi_i(\bx(i)) + 6(1+\log w)^2i, 
\end{equation*}
where $|\bar V(i)|$ denotes the number of nodes that have been deactivated by round $i$ (see Section~\ref{sec:notations}).
We will prove that the cost of the fractional algorithm up to round $i$ -- which we denote by $\Cost(i)$ -- is always bounded by the potential, 
\begin{equation}\label{eq:CleP}
    \Cost(i) \leq  P(i).
\end{equation}
Since $|\bar V(i)| \leq w i$ and $\Phi_i(\bx(i))\leq 0$, this implies that $\Cost(i) \leq \Ocal(\log^2 w)\cdot i$.

The proof is by induction on $i$. When moving from layer $i$ to layer $i+1$, the goal is to show that the increase in cost, which equals $\OT(\bx(i),\bx(i+1))$, is at most the variation of the potential, which is equal to $P(i+1)-P(i)$. Instead of considering the optimal transport from $\bx(i)\in K(i)$ to $\bx(i+1)\in K(i+1)$, we consider a series of (possibly sub-optimal) phases that together form a transport between $\bx(i)$ and $\bx(i+1)$. There are two main phases.

\begin{enumerate}
    \item For each node $\ell\in \Lcal(i)$ that does not have a descendant in $\Lcal(i+1)$, we displace the fractional mass present at $\ell$ to obtain $x_\ell = 0$. We call this phase the \textit{deactivation phase}. The deactivation of leaf $\ell$ is performed continuously by setting $\bx(t) = \argmin_{x\in K(i)} \Phi_{i}(\bx)+t x_\ell$ and letting $t\rightarrow\infty$. After the deactivation of $\ell$, in slight abuse of notation, we remove $\ell$ from the set $V(i)$ and write $K(i)$ for the new corresponding polytope. 
    After consecutively performing all deactivations, we obtain a new configuration, $\bx'$, and re-interpret it as a point in $K(i+1)$ by setting $x_\ell=x_{p(\ell)}/|c(p(\ell))|$ for each $\ell\in\Lcal(i+1)$. 
    At the end of the deactivation phase, we have $\bx' \in \argmin_{\bx \in K(i+1)} \Phi_{i}(\bx)$. 
    The deactivation phase is studied in Section \ref{sec:deactivate} where we show the following bound on the associated movement cost:
\begin{equation}\label{eq:main-delete}
        \OT(\bx(i),\bx') \leq \frac{4}{w}\left(|\bar V(i+1)|-|\bar V(i)|\right) + 4\log w \cdot\left(\Phi_i(\bx')-\Phi_i(\bx(i))\right).
    \end{equation}

    \item Then, we consider the configurations $\bx(t) = \argmin_{\bx \in K(i+1)} \Phi_{i}(\bx)+t\sum_{\ell \in \Lcal(i+1)}x_\ell \log x_\ell$ for values of $t$ that increase continuously from $0$ to $1$, which interpolate between $\bx'$ and $\bx(i+1)$. We call this the \textit{growth phase}.
    After the growth, we add a cost of $1$ to account for the ultimate movement from $\Lcal(i)$ to $\Lcal(i+1)$. The growth phase is studied in Section \ref{sec:grow}, where we will see that 
    \begin{align}
    \OT(\bx',\bx(i+1))&\leq 2(1+\log w)^2+1,\label{eq:main-elong-1}\\
    -\log w&\leq \Phi_{i+1}(\bx(i+1))-\Phi_i(\bx').\label{eq:main-elong-2} 
    \end{align}
    
\end{enumerate}
Combining the above equations as $\eqref{eq:main-delete}+\eqref{eq:main-elong-1}+4\log w \cdot \eqref{eq:main-elong-2}$ yields
\begin{equation*}
    \OT(\bx(i),\bx(i+1)) \leq P(i+1) - P(i),
\end{equation*}
which in turn implies \eqref{eq:CleP} by induction. Inequalities \eqref{eq:main-delete}, \eqref{eq:main-elong-1} and \eqref{eq:main-elong-2} also help to understand the the design of the potential $P(i)$, which is composed of two terms to control the movement cost in deactivation phases and a term to control the movement cost in growth phases. Having two different terms for deactivation phases stems from two different ways of bounding the cost in these phases, depending on whether $x_\ell$ is larger or smaller than $1/w^2$.

\subsection{Explicit algorithm and basic lemmas}
\label{sec: explicit_and_basic}
In this section, we establish some fundamental properties of the algorithm. The proofs of these lemmas are deferred to Appendix~\ref{ap: proof_of_explicit_and_basic}.

The following lemma (specialized with $\bgamma=0$) explicitly describes the configuration of the algorithm at any given layer by specifying the probabilities $\frac{x_u}{x_{p(u)}}$ of being in the subtree of $u$ conditioned on being in the subtree of $p(u)$. 

\begin{lemma}[Explicit algorithm]\label{lem:explicitAlgo}
    Let $\bgamma\in\R^\Lcal$ and
    \begin{align*}
        \bx=\argmin_{\bx\in K}\sum_{u\in V\setminus\{r\}}h_{p(u)}x_u\log\frac{x_u}{x_{p(u)}} + \sum_{\ell\in\Lcal}\gamma_\ell x_\ell.
    \end{align*}
    Then for each $u\in V\setminus\{r\}$,
    \begin{align*}
        \frac{x_u}{x_{p(u)}} = \frac{y_u}{\sum\limits_{v\in c(p(u))}y_v},
    \end{align*}
    where $y_u$ is defined recursively via
    \begin{align*}
        y_u=\begin{cases}
            e^{-\gamma_u/h_{p(u)}}&\text{if }u\in\Lcal\\
            \left(\sum\limits_{v\in c(u)} y_v\right)^{h_u/h_{p(u)}}\quad&\text{if }u\notin\Lcal.
        \end{cases}
    \end{align*}
    In particular, if $\bgamma=0$, then $1\le y_u\le |\Lcal_u|$.
\end{lemma}

As we will consider continuously evolving configurations, it will be useful to understand their dynamics. These are expressed most naturally by differential equations describing the evolution of the conditional probabilities $\frac{x_u}{x_{p(u)}}$. The premise~\eqref{eq:MDcontainment} of the following lemma will be satisfied for different functions $\bgamma$ in deactivation and growth steps.

\begin{lemma}[Continuous dynamics]\label{lemma: dynamics}
    Let $\bgamma\colon [0,\infty)\to \R^\Lcal$ and $x\colon[0,\infty)\to K$ be continuously differentiable and satisfy $x_u(t)>0$ for all $u\in V$, $t\in[0,\infty)$ and
\begin{equation}
    \nabla^2 \Phi(\bx(t))\dot\bx(t) +\sum_{\ell \in \Lcal}\gamma_\ell(t)\be_\ell  \in -N_{K}\label{eq:MDcontainment}
\end{equation}
for all $t\in[0,\infty)$. Then for each node $u\in V\setminus\{r\}$,
    \begin{equation*}
        \partial_t\frac{x_u}{x_{p(u)}} = \frac{1}{h_{p(u)}x_{p(u)}}\left( \frac{x_u}{x_{p(u)}}\sum_{\ell \in \Lcal_{p(u)}}\!\!\!x_\ell\gamma_\ell \quad -\quad \sum_{\ell \in  \Lcal_u} x_\ell\gamma_\ell\right).
    \end{equation*}
\end{lemma}

The next lemma bounds the movement cost in terms of the change of the conditional probabilities.

\begin{lemma}[Movement cost]\label{lemma: mvt_cost}
    The instantaneous cost during continuous movement is at most
    \begin{equation*}
        2\sum_{u\in V \setminus \{r\}} h_{p(u)}x_{p(u)}\left(\partial_t \frac{x_u}{x_{p(u)}}\right)_-,
    \end{equation*}
    where we write $(z)_-=\max\{-z,0\}$.
\end{lemma}
\subsection{Deactivation phase}\label{sec:deactivate}
This section examines the deactivation of leaves in $\Lcal(i)$ at a given step $i$. For simplicity, and without loss of generality\footnote{The deactivation of multiple leaves can take place by iterating this method with all elements of $\Lcal(i)\setminus V(i+1)$. Enumerating deactivated leaves in some arbitrary order $\ell_1, \dots, \ell_j$, one has $|\bar V(i+1)|-|\bar V(i)| = d_{\ell_1}+\dots+d_{\ell_j}$.}, we focus on the case where only one leaf $\ell$ is deactivated and we  denote by $d_\ell\geq 1$ the corresponding number of deactivated nodes, i.e., which had $\ell$ as their unique descendant in $\Lcal(i)$. Denoting by $\bx'$ the configuration after the deactivations, the goal of this section is to show that
\begin{equation}\label{eq:changePotDel}
    \OT(\bx(i),\bx')\leq 4d_\ell/w + 4\log w\cdot \left(\Phi_i(\bx')-\Phi_i(\bx(i))\right), \tag{\ref{eq:main-delete}'}
\end{equation}
where the configuration after deactivations satisfies
\begin{equation*}
\bx' = \argmin_{\bx\in K(i) \text{ and } x_\ell = 0} \Phi_i(\bx).
\end{equation*}

For $t\in[0,\infty)$, we define $\bx(t)$ to interpolate between $\bx(i)$ and $\bx'$ as follows,
\begin{equation*}
    \bx(t) = \argmin_{\bx \in K(i)} \Phi_i(\bx)+t x_\ell.
\end{equation*}
We start by stating the following claims on $\bx(t)$: 
\begin{lemma}
    The following statements are satisfied:
    \begin{enumerate}[label=(\alph*)]
    \item $\bx(t)$ satisfies the explicit form given in Lemma \ref{lem:explicitAlgo} with $\gamma(t)=t\be_\ell$. Therefore, it is differentiable, and $\bx(t)$ converges to $\bx'$.\label{claim:explicit}
    \item $\bx(t)$ satisfies the mirror descent equation $\nabla^2\Phi_i(\bx(t)) \dot \bx(t) \in - \be_\ell - N_{K(i)}$ and thus, the dynamics given by Lemma \ref{lemma: dynamics}. Also, $\partial_t D(\bx',\bx(t)) = - x_{\ell}(t)$ and $\dot x_{\ell}(t) \leq - \frac{1}{2d_\ell}x_\ell(t)$. \label{claim:mirror}   
\end{enumerate}
\end{lemma}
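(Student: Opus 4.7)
The plan is to reduce part~(a) directly to Lemma~\ref{lem:explicitAlgo} after rewriting $\Phi_t$ in its conditional form, and to derive part~(b) by differentiating the first-order optimality condition for $\bx(s)$ in $s$ and then unpacking the dynamics of the $y_u$-recursion along the deactivated chain. Using the identity from Appendix~\ref{ap: useful-identities}, $\Phi_t(\bx) = \sum_{u\neq r}h_{p(u)}(t)\,x_u\log(x_u/x_{p(u)})$, the defining problem for $\bx(s)$ becomes exactly the minimization of Lemma~\ref{lem:explicitAlgo} with $\gamma(s) = s\be_\ell$, so the explicit form and smoothness of $\bx(s)$ in $s$ are immediate. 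To see $\bx(s)\to \bx'$, I would note that $y_\ell(s) = e^{-s/h_{p(\ell)}}$ is the only piece of the recursion that depends on $s$ and decays to $0$; comparing to any feasible $\bx^\star$ with $x^\star_\ell = 0$ gives $s\,x_\ell(s) \le \Phi_t(\bx^\star) - \Phi_t(\bx(s)) = O(1)$, hence $x_\ell(s)\to 0$, and strong convexity of $\Phi_t$ identifies the limit with $\bx' = \argmin_{\bx\in K(t),\,x_\ell=0}\Phi_t(\bx)$.

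For the mirror descent equation, the first-order optimality of $\bx(s)$ reads $\nabla\Phi_t(\bx(s)) + s\be_\ell \in -N_{K(t)}$. Since $N_{K(t)}$ is a linear subspace, differentiating this membership in $s$ preserves it and yields $\nabla^2\Phi_t(\bx(s))\dot\bx(s) + \be_\ell \in -N_{K(t)}$. Lemma~\ref{lemma: dynamics} then applies directly with $\gamma(s)\equiv \be_\ell$ and gives the stated dynamics.

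For $\partial_s D(\bx',\bx(s)) = -x_\ell(s)$, the key tool is that any element of $N_{K(t)}$ is orthogonal to the tangent space $\bar K(t)$. Writing $\nabla\Phi_t(\bx(s)) = -s\be_\ell + \xi(s)$ with $\xi(s)\in -N_{K(t)}$ and observing $\bx' - \bx(s),\,\dot\bx(s) \in \bar K(t)$, the $\xi$-contributions vanish, giving $\langle\nabla\Phi_t(\bx(s)),\bx'-\bx(s)\rangle = -s(x'_\ell - x_\ell(s)) = s\,x_\ell(s)$ (using $x'_\ell=0$) and $\partial_s\Phi_t(\bx(s)) = \langle\nabla\Phi_t(\bx(s)),\dot\bx(s)\rangle = -s\,\dot x_\ell(s)$. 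Substituting into $D(\bx',\bx(s)) = \Phi_t(\bx') - \Phi_t(\bx(s)) - \langle\nabla\Phi_t(\bx(s)),\bx'-\bx(s)\rangle$ and differentiating cancels cleanly to $-x_\ell(s)$.

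The main obstacle is the inequality $\dot x_\ell(s) \le -x_\ell(s)/(2d_\ell)$. Starting from $\partial_s\log x_\ell = \sum_{u\in\ell\to r}\partial_s\log(x_u/x_{p(u)})$ and substituting the dynamics with $\gamma=\be_\ell$, a short calculation gives
\[
\frac{\dot x_\ell}{x_\ell} \;=\; \sum_{u\in \ell\to r}\frac{x_\ell}{h_{p(u)}}\!\left(\frac{1}{x_{p(u)}} - \frac{1}{x_u}\right).
\]
Indexing the deactivated chain as $u_1=\ell,\,u_2=p(\ell),\,\ldots,\,u_{d_\ell}$ and letting $u_{d_\ell+1}=p(u_{d_\ell})$ be the lowest ancestor of $\ell$ with at least two active leaf descendants, the terms for $i<d_\ell$ vanish because $x_{u_i}=x_{u_{i+1}}=x_\ell$, the terms for $i>d_\ell$ are non-positive, and the $i=d_\ell$ term equals $-\frac{1}{d_\ell}\bigl(1 - x_\ell/x_{u_{d_\ell+1}}\bigr)$. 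The whole question thus reduces to $x_\ell/x_{u_{d_\ell+1}} \le 1/2$. Here I would cash in part~(a): iterating the $y_u$-recursion along the unique-child segment $u_1,\ldots,u_{d_\ell}$ yields $y_{u_{d_\ell}}(s) = e^{-s/d_\ell} \le 1$, while by definition $u_{d_\ell+1}$ has at least one other active child $v$ (otherwise it too would be deactivated), and since $\gamma\equiv 0$ on $\Lcal_v$ the final assertion of Lemma~\ref{lem:explicitAlgo} gives $y_v\ge 1$. Therefore $x_\ell/x_{u_{d_\ell+1}} = y_{u_{d_\ell}}/\sum_{v'}y_{v'} \le e^{-s/d_\ell}/(e^{-s/d_\ell}+1) \le 1/2$, closing the argument.
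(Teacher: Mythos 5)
Your proof is correct and follows essentially the same route as the paper's: reduce to Lemma~\ref{lem:explicitAlgo} for the explicit form, differentiate the first-order optimality condition to obtain the mirror-descent inclusion, use orthogonality of $N_{K(t)}$ to compute $\partial_s D(\bx',\bx(s))=-x_\ell(s)$, and use the $y_u$-recursion to bound $x_\ell/x_{p(u_{d_\ell})}\le 1/2$. The only cosmetic differences are that you give an alternative variational argument for convergence (the paper just appeals to continuity of the explicit formula), and you expand $\dot x_\ell/x_\ell$ as a telescoping sum over the whole path $\ell\to r$ rather than invoking the dynamics at the single node $u_{d_\ell}$ together with $\dot x_{p(u_{d_\ell})}\le 0$, which is an equivalent reorganization of the same calculation.
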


\begin{proof}
We provide a short proof of the above claims.

\textit{\ref{claim:explicit} Explicit form.} $\bx(t)$ clearly satisfies the conditions of Lemma \ref{lem:explicitAlgo}. Since the expression only involves differentiable operations, $\bx(t)$ is differentiable. Convergence to $\bx'$ holds because the explicit form of $\bx(t)$ converges to the explicit form of $\bx'$ (which is given by invoking Lemma \ref{lem:explicitAlgo} with $\gamma=0$ and the polytope $K$ where the leaf $\ell$ is removed).

\textit{\ref{claim:mirror} Mirror descent.} For any $t\geq 0$, the optimality of $\bx(t)$ implies $\nabla \Phi_i(\bx(t))+t\be_\ell \in -N_{K(i)}$. Taking the derivative with respect to $t$, we obtain, $\nabla^2 \Phi_i(\bx(t))\dot\bx(t) \in -\be_\ell-N_{K(i)}$. By definition of the Bregman divergence, $\partial_t D(\bx',\bx(t)) = \dot \bx(t)^T\nabla^2\Phi_i(\bx(t))(\bx(t)-\bx') = -x_\ell(t)$, where we used that $x_\ell' = 0$ and that $\bx(t)-\bx'$ is orthogonal to $N_{K(i)}$. Then, applying Lemma \ref{lemma: dynamics} to the lowest node $u$ on the path $\ell\to r$ that has an active sibling, and observing that $(x_\ell(t),\dot x_\ell(t)) = (x_u(t),\dot x_u(t))$ and that $h_{p(u)}=d_\ell$, we have $d_\ell\left(\frac{\dot x_{\ell}}{x_{\ell}} - \frac{\dot x_{p(u)}}{x_{p(u)}}\right) = \frac{x_\ell}{x_{p(u)}}-1$. Then, by Lemma~\ref{lem:explicitAlgo}, we have $\frac{x_\ell}{x_{p(u)}} \leq 1/2$  and $\dot x_{p(u)} \leq 0$, since all conditional probabilities on $\ell\to r$ are non-increasing. Therefore, $d_\ell \frac{\dot x_\ell}{x_\ell}\leq -1/2$.
\end{proof}

Using claim \ref{claim:mirror} above, we note that
\begin{equation}
    \int_{t=0}^\infty x_\ell(t)dt =  -\int_{t=0}^\infty \partial_t D(\bx',\bx(t))dt = D(\bx',\bx(i)) =\Phi_i(\bx')-\Phi_i(\bx(i))  \label{eq: breg-potential}
\end{equation}
where the last equation uses that $\nabla \Phi_i(\bx(i)) \in -N_{K(i)}$ (by optimality of $\bx(i)$) is orthogonal to $\bx'-\bx(i)$.

For the proof deactivation step \eqref{eq:changePotDel}, in light of \eqref{eq: breg-potential}, it suffices to prove that 
\begin{equation}\label{eq: new_deletion_goal}
    \OT(\bx(i),\bx')  \leq4d_\ell/w +4 \log w\int_{t=0}^\infty x_\ell(t) dt. 
\end{equation}
Since $x_\ell(t)$ is decreasing during the deletion of $\ell$, if $x_\ell(i)>1/w^2$, there exists an instant $\bar t> 0$ such that $x_\ell(t) \geq 1/w^2$ for $t\leq \bar t$ and  $x_\ell(t) \leq  1/w^2$ for $t\geq \bar t$. If $x_\ell(i)\leq 1/w^2$, we simply set $\bar t = 0$. By the triangle inequality, $\OT(\bx(i),\bx')\le \int_{t=0}^{\infty}  \partial_t \Cost(t)dt$. 
To achieve the above goal \eqref{eq: new_deletion_goal}, it will thus suffice to prove the two following equations:
\begin{align}
    \int_{t=0}^{\bar t}  \partial_t \Cost(t)dt &\leq 4 \log w \int_{t=0}^{\bar t} x_\ell(t)dt,\label{eq: subgoal1}\\
    \int_{\bar t}^{\infty}  \partial_t \Cost(t)dt &\leq 4d_\ell /w\label{eq: subgoal2}
\end{align}

Applying Lemma~\ref{lemma: mvt_cost} and Lemma~\ref{lemma: dynamics} with $\gamma = \be_\ell$, the instantaneous cost during deactivation of leaf $\ell$ satisfies
\begin{align}
    \partial_t \Cost(t) &\leq 2x_\ell \sum_{u\in \ell \rightarrow r} \left(1-\frac{x_u}{x_{p(u)}}\right).\label{eq: costphase2}
\end{align}
In particular, this shows that $\partial_t \Cost(t) \leq 2w x_\ell$, because $w$ bounds the number of non-zero terms in the above sum. Together with the claim \ref{claim:mirror} that $x_\ell \leq -2d_\ell \dot x_\ell$, we get that
\begin{equation*}\int_{\bar t}^{\infty}  \partial_t \Cost(t)dt \leq -4d_\ell w\int_{t = \bar t}^\infty  \dot x_\ell(t) dt = 4d_\ell w x_\ell(\bar t) \leq 4d_\ell /w,
\end{equation*}
which proves inequality \eqref{eq: subgoal2}.

To conclude, we now need to prove \eqref{eq: subgoal1}.  Since $1-z\le -\log z$ for all $z>0$,
\begin{equation*}
    \sum_{u\in \ell \rightarrow r} \left(1-\frac{x_u}{x_{p(u)}}\right) \le -\log\left(\prod_{u\in \ell \rightarrow r}\frac{x_u}{x_{p(u)}} \right) = -\log x_\ell.
\end{equation*}
For $t< \bar t$, since $x_\ell(t) \geq 1/w^2$, we can further bound this by $2\log w$. In turn, using \eqref{eq: costphase2}, we get that
\begin{equation*}
    \partial_t \Cost(t) \leq 4\log w \cdot x_\ell,
\end{equation*}
which completes the proof of inequality \eqref{eq: subgoal1}.

\subsection{Growth phase}\label{sec:grow}
In this section, we will account for the growth of the tree by interpolating continuously between the configuration defined after all deactivations, $\bx'$, and the configuration defined at the next layer $\bx(i+1)$. We consider the following trajectory, where we now have for $t\in (0,1]$,
\begin{equation*}
    \bx(t) = \argmin_{\bx \in K(i+1)} \Phi_t(\bx) \quad \text{where} \quad \Phi_t(\bx) = \Phi_{i}(\bx)+t\sum_{\ell \in \Lcal(i+1)}x_\ell \log x_\ell.
\end{equation*}
Taking the derivative with respect to $t$ in $\Phi_t(\bx(t))$ yields, by the optimality of $\bx(t)$, that 
$\partial_t \Phi_t(\bx(t)) = \sum_{\ell \in \mathcal{L}(i+1)} x_\ell \log x_\ell \geq -\log w$. Hence, $\Phi_{i+1}(\bx(i+1))-\Phi_i(\bx')=\int_0^1\partial_t \Phi_t(\bx(t)) dt\ge -\log w$ yields inequality \eqref{eq:main-elong-2}.
To complete the analysis of the growth step we must therefore prove inequality \eqref{eq:main-elong-1} stating that 
\begin{equation}
\OT(\bx',\bx(i+1))\leq 2(1+\log w)^2+1.\tag{\ref{eq:main-elong-1}}\label{eq:main-elong-1Repeated}
\end{equation} 

Though the configuration $\bx(t)$ is a member of $K(i+1)$, we will not pay for movements between layer $i$ and layer $i+1$ until the end of the growth phase.\footnote{More precisely, we initially pay only for the movement costs associated to the projection of $\bx(t)$ onto $K(i)$.} This is because we can effectively move the probability mass between points of $\Lcal(i)$, and then move the mass from $\Lcal(i)$ to $\Lcal(i+1)$ at the very end of the elongation. This final movement is paid by the $+1$ term in inequality \eqref{eq:main-elong-1Repeated}.
Thus, to preserve \eqref{eq:main-elong-1Repeated}, it suffices to show that the instantaneous movement cost, for $t\in [0,1]$, is bounded as
\begin{align}
    \partial_t \Cost(t)\le 2(1+\log w)^2.\label{eq:growthToShow}
\end{align}
Differentiating the optimality condition $\nabla \Phi_t(\bx(t))\in- N_{K(i+1)}$, we get that $\bx(t)$ satisfies $\nabla^2 \Phi_t(\bx) \dot \bx +\sum_{\ell \in \Lcal(i+1)} (1+\log x_\ell)\be_\ell\in  - N_{K(i+1)}$. 
Since $\sum_{\ell \in \Lcal(i+1)} \be_\ell \in N_{K(i+1)}$, we can rewrite this expression as
\begin{equation*}
    \nabla^2 \Phi_t(\bx) \dot \bx + \sum_{\ell \in \Lcal(i+1)} \log x_\ell\cdot \be_\ell \in - N_{K(i+1)}.
\end{equation*}
In light of Lemma \ref{lemma: dynamics} we have for all $u\in V(i)$, 
\begin{equation*}
    \partial_t \frac{x_u}{x_{p(u)}} = \frac{1}{h_{p(u)}x_{p(u)}}\left( \frac{x_u}{x_{p(u)}}\sum_{\ell \in\Lcal_{p(u)}(i+1)}x_\ell\log x_\ell -\sum_{\ell \in \Lcal_u(i+1)} x_\ell \log x_\ell\right),
\end{equation*}
where we define $h_{p(u)} = h_{p(u)}(i)+t$. By Lemma \ref{lemma: mvt_cost}, we recall that the continuous movement is bounded by
\begin{align}
   \partial_t \Cost(t) &\leq 2\sum_{u\in V(i)\setminus \{r\}}h_{p(u)}x_{p(u)}\left(\partial_t \frac{x_u}{x_{p(u)}}\right)_- \nonumber \\
   &= 2\sum_{u\in V(i)\setminus \{r\}} \left(\frac{x_u}{x_{p(u)}}\sum_{\ell \in\Lcal_{p(u)}(i+1)}x_\ell\log x_\ell -\sum_{\ell \in \Lcal_u(i+1)} x_\ell \log x_\ell \right)_-.\label{eq:growthCostInitialBound}
\end{align}

For a node $p\in V(i)\setminus\Lcal(i)$, let $c^-(p)\subset c(p)$ denote the set of children $u$ of $p$ that have a non-zero contribution to this sum, i.e., satisfying $\frac{x_u}{x_{p}}\sum_{\ell \in\Lcal_{p}}x_\ell\log x_\ell -\sum_{\ell \in \Lcal_u} x_\ell \log x_\ell<0$, and let $c^+(p)=c(p)\setminus c^-(p)$. Let $\Lcal_p^+ = \bigcup_{u\in c^+(p)}\Lcal_u(i+1)$, $w_p^+=|\Lcal_p^+|$ and $x_p^+ = \sum_{u\in c^+(p)}x_u = \sum_{\ell\in\Lcal_p^+}x_\ell$, and define $\Lcal_p^-$, $w_p^-$ and $x_p^-$ similarly, so that $x_p^++x_p^-=x_p$ and $\Lcal_p^+\cup\Lcal_p^-=\Lcal_p(i+1)$. Then
\begin{align}
    \sum_{u\in c(p)} \left(\frac{x_u}{x_{p}}\sum_{\ell \in\Lcal_{p}(i+1)}x_\ell\log x_\ell -\sum_{\ell \in \Lcal_u(i+1)} x_\ell \log x_\ell \right)_-\notag
   &= \sum_{\ell \in \Lcal_p^-} x_\ell \log x_\ell - \frac{x_p^-}{x_{p}}\sum_{\ell \in\Lcal_{p}(i+1)}x_\ell\log x_\ell\notag\\
   &= \frac{x_p^+}{x_{p}}\sum_{\ell \in\Lcal_{p}^-}x_\ell\log x_\ell - \frac{x_p^-}{x_{p}}\sum_{\ell \in \Lcal_p^+} x_\ell \log x_\ell\notag\\
   &\le \frac{x_p^+x_p^-}{x_{p}}\log x_p^- - \frac{x_p^-x_p^+}{x_{p}}\log \frac{x_p^+}{w_p^+}\notag\\
   &\le \frac{x_p^+x_p^-}{x_p}\log(w_p^+w_p^-).\label{eq:growthIntermediateBound}
\end{align}
where the first inequality uses that $\sum x_\ell\log x_\ell$ subject to the constraints $x\ge 0$ and $\sum x_\ell=x_p^-$ (resp. $\sum x_\ell=x_p^+$) is maximized (resp. minimized) when all mass is concentrated on a single leaf (resp. spread equally across the $w_p^+$ leaves), and the last inequality follows from the (yet unproven) relation
\begin{align}
    x_p^-\le x_p^+w_p^-.\label{eq:claimedIneq}
\end{align}

To see that \eqref{eq:claimedIneq} is true, note that by Lemma~\ref{lem:explicitAlgo}, there exist $1\le y_u\le |\Lcal_u|$ such that
\begin{align*}
    \frac{x_p^+}{x_p} = \frac{\sum_{u\in c^+(p)}y_u}{\sum_{u\in c(p)}y_u} \ge \frac{1}{1+\sum_{u\in c^-(p)}|\Lcal_u|} = \frac{1}{1+w_p^-}.
\end{align*}
Applying this twice, we get
\begin{align*}
    \frac{x_p^+w_p^-}{x_p} \ge \frac{w_p^-}{1+w_p^-} = 1 - \frac{1}{1+w_p^-} \ge 1 - \frac{x_p^+}{x_p} = \frac{x_p^-}{x_p},
\end{align*}
and multiplying by $x_p$ yields \eqref{eq:claimedIneq}.

Combining~\eqref{eq:growthCostInitialBound} and \eqref{eq:growthIntermediateBound} yields
\begin{align*}
    \partial_t\Cost(t) \le 2\sum_{p\in V(i)\setminus\Lcal(i)} \frac{x_p^+x_p^-}{x_p}\log(w_p^+w_p^-).
\end{align*}

Finally, our objective \eqref{eq:growthToShow} is implied by invoking the following lemma with $u=r$.

\begin{lemma}\label{lem: induction_for_growth}
    For each $u\in V(i)$, writing $V_u\subseteq V(i)$ for the nodes in the subtree rooted at $u$ and $w_u=|\Lcal_u(i+1)|$, it holds that
    \begin{align*}
        \sum_{p\in V_u\setminus\Lcal(i)} \frac{x_{p}^+x_p^-}{x_{p}} \log(w_p^+ w_p^-)\le x_u (1+\log w_u)^2.
    \end{align*}
\end{lemma}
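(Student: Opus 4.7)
The natural approach is induction on the height of the subtree of $V(t)$ rooted at $u$. For the base case, $u\in\Lcal(t)$ gives $V_u\setminus\Lcal(t)=\emptyset$, so the left-hand side vanishes while the right-hand side is non-negative.

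For the inductive step with $u\in V(t)\setminus\Lcal(t)$, I would separate the contribution of $p=u$ from those of the children's subtrees,
\begin{equation*}
\sum_{p\in V_u\setminus\Lcal(t)}\frac{x_p^+x_p^-}{x_p}\log(w_p^+w_p^-) = \frac{x_u^+x_u^-}{x_u}\log(w_u^+w_u^-) + \sum_{v\in c(u)}\sum_{p\in V_v\setminus\Lcal(t)}\frac{x_p^+x_p^-}{x_p}\log(w_p^+w_p^-),
\end{equation*}
and apply the induction hypothesis to each $v\in c(u)$ to bound the inner sum by $x_v(1+\log w_v)^2$. Since $w_v\le w_u^+$ for $v\in c^+(u)$ (and $w_v,w_u^+\ge 1$, on which $z\mapsto(1+\log z)^2$ is non-decreasing), one has $\sum_{v\in c^+(u)}x_v(1+\log w_v)^2\le x_u^+(1+\log w_u^+)^2$, with the symmetric bound for $c^-(u)$. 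After dividing by $x_u$ and setting $q=x_u^+/x_u\in[0,1]$, $A=\log w_u^+$, $B=\log w_u^-$, and $L=\log w_u=\log(e^A+e^B)$, the inductive step reduces to
\begin{equation*}
q(1-q)(A+B)+q(1+A)^2+(1-q)(1+B)^2\le(1+L)^2 \qquad(\star)
\end{equation*}
for all $q\in[0,1]$ and $A,B\ge 0$.

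The main technical obstacle is establishing $(\star)$. The left-hand side is a concave quadratic in $q$ (second derivative $-2(A+B)\le 0$), so its maximum on $[0,1]$ is attained either at a boundary, where it equals $(1+A)^2$ or $(1+B)^2$ (both $\le(1+L)^2$ since $\max(A,B)\le L$), or at the interior critical point $q^\star=\tfrac12+\tfrac{(A-B)(2+A+B)}{2(A+B)}$. The critical point lies in $[0,1]$ only when $\delta(2+k)\le k$, where $k:=A+B$ and $\delta:=|A-B|$, and an explicit quadratic computation then gives the maximum value $1+\tfrac{5k}{4}+\tfrac{k^2}{4}+\tfrac{\delta^2(k+1)(k+4)}{4k}$. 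Writing $(1+L)^2=(1+k/2+\phi)^2$ with $\phi:=\log(2\cosh(\delta/2))\ge\log 2$, the desired inequality rearranges to
\begin{equation*}
(2+k)\phi+\phi^2\ge k/4+\tfrac{\delta^2(k+1)(k+4)}{4k}.
\end{equation*}
Using $\delta\le k/(k+2)$ together with the elementary bound $(k+1)(k+4)/(k+2)^2\le 9/8$ (easily verified from $(k-2)^2\ge 0$ and attained at $k=2$), the right-hand side is at most $17k/32$, so it suffices to show $(2+k)\log 2\ge 17k/32$, which follows from the numerical fact $\log 2 > 17/32$.

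I expect the algebraic verification of $(\star)$ at the critical point to be the main bottleneck; the inductive framework and the monotonicity-based reduction to the two-subtree inequality are routine.
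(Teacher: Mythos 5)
Your proposal is correct, and its inductive framework, the separation of the $p=u$ term, and the monotonicity reduction $\sum_{v\in c(u)}x_v(1+\log w_v)^2\le x_u^+(1+\log w_u^+)^2 + x_u^-(1+\log w_u^-)^2$ coincide exactly with what the paper does, down to the reduction to the two-subtree inequality $(\star)$ (which is the paper's inequality~\eqref{eq:indStepToShow}). Where you diverge is in how $(\star)$ is established. The paper first invokes concavity of $w\mapsto(1+\log w)^2$ on $w\ge 1$ to replace $q(1+\log w_u^+)^2+(1-q)(1+\log w_u^-)^2$ by $(1+\log(qw_u^++(1-q)w_u^-))^2$, then substitutes $w_u^+=(1+a)w_u^-$, $d=ew_u^-$, and closes with the elementary bound $q(1-q)\le\min\{\log 2,\log(1/q)\}\le\log\tfrac{2+a}{1+aq}$. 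You instead observe that the left side of $(\star)$ is concave quadratic in $q$, explicitly locate and evaluate the interior critical point $q^\star=\tfrac12+\tfrac{(A-B)(2+A+B)}{2(A+B)}$, and reduce the whole thing to $(2+k)\phi+\phi^2\ge k/4+\tfrac{\delta^2(k+1)(k+4)}{4k}$ with $k=A+B$, $\delta=|A-B|$, $\phi=\log(2\cosh(\delta/2))$, which you discharge via $\delta\le k/(k+2)$, $(k+1)(k+4)/(k+2)^2\le 9/8$, $\phi\ge\log 2$, and $\log 2>17/32$. I verified these calculations and they check out (and the $k=0$ degenerate case is trivial since the left side of $(\star)$ is then constantly $1$). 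Your route is a more hands-on quadratic optimization that avoids the concavity trick but leans on a couple of numerical comparisons; the paper's route is somewhat more structural and avoids explicit critical-point algebra. Both are about equally long. One small omission: you should, as the paper does, dispose of the case $w_u^+=0$ or $w_u^-=0$ separately (there the coefficient $x_u^+x_u^-/x_u$ vanishes so the $p=u$ term contributes nothing, and the claim follows directly from the induction hypothesis on the children), since otherwise $A$ or $B$ is undefined and the monotonicity step silently assumes $w_u^\pm\ge 1$.
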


\begin{proof}
    We proceed by induction on the level of $u$. If $u\in\Lcal(i)$, the inequality follows trivially since the left-hand side is an empty sum.

    For the induction step, consider some $u\in V(i)\setminus\Lcal(i)$. Then 
    \begin{align*}
        \sum_{p\in V_u\setminus\Lcal(i)} \frac{x_{p}^+x_p^-}{x_{p}} \log(w_p^+ w_p^-) &= \frac{x_{u}^+x_u^-}{x_{u}} \log(w_u^+ w_u^-) + \sum_{v\in c(u)}\sum_{p\in V_v\setminus\Lcal(i)} \frac{x_{p}^+x_p^-}{x_{p}} \log(w_p^+ w_p^-) \\
        &\le \frac{x_{u}^+x_u^-}{x_{u}} \log(w_u^+ w_u^-) + \sum_{v\in c(u)}x_v(1+\log w_v)^2\\
        &\le \frac{x_{u}^+x_u^-}{x_{u}} \log(w_u^+ w_u^-) + x_u^+(1+\log w_u^+)^2 + x_u^-(1+\log w_u^-)^2
    \end{align*}
    where the first inequality follows from the induction hypothesis. We may assume $w_u^-\ge 1$ and $w_u^+\ge 1$, since otherwise $x_u^+$ or $x_u^-$ is $0$ and the lemma follows directly from the induction hypothesis applied to the children of $u$.
    Dividing by $x_u$ and writing $q:=\frac{x_u^+}{x_u}$, so that $1-q=\frac{x_u^-}{x_u}$, it remains to show that
    \begin{align}
    q(1-q)\log(w_u^+w_u^-) + q(1+\log w_u^+)^2 + (1-q)(1+\log w_u^-)^2 &\le (1+\log w_u)^2\label{eq:indStepToShow}
    \end{align}
    for all $q\in[0,1]$.

    By concavity of $w\mapsto(1+\log w)^2$ for $w\ge 1$, we have
    \begin{align*}
        q(1+\log w_u^+)^2 + (1-q)(1+\log w_u^-)^2 \le (1+\log(qw_u^++(1-q)w_u^-))^2.
    \end{align*}
    By symmetry, assume $w_u^+\ge w_u^-$, so we can write $w_u^+=(1+a)w_u^-$ for some $a\ge 0$. Using $w_u=w_u^++w_u^-$, and writing $d:=ew_u^-$ (where $e$ is the base of $\log$), inequality~\eqref{eq:indStepToShow} follows if we show that
    \begin{align*}
    q(1-q)\log((1+a)d^2) + \log^2((1+aq)d) &\le \log^2 ((2+a)d).
    \end{align*}
    for all $q\in[0,1]$, $a\ge 0$ and $d\ge e$. Indeed,
    \begin{align}
        &\phantom{=} q(1-q)\log((1+a)d^2) + \log^2((1+aq)d) - \log^2 ((2+a)d)\notag\\
        &= q(1-q)(2\log d + \log(1+a)) + 2\log d\cdot\log\frac{1+aq}{2+a} + \log^2(1+aq) - \log^2(2+a)\notag\\
        &\le \log(2+a)\log\frac{1+a}{2+a} + \log(1+aq)\log\frac{1+aq}{1+a}\label{eq:logfrac}\\
        &\le 0\notag,
    \end{align}
    where inequality~\eqref{eq:logfrac} uses
    \begin{align*}
        q(1-q)\le \min\left\{\log 2,\log \frac{1}{q}\right\}\le \log\frac{2+a}{1+aq}.
    \end{align*}
\end{proof}

\paragraph*{Acknowledgements.} The authors thank the anonymous reviewers for their helpful comments. RC thanks Laurent Massoulié for insightful discussions. The work was carried out while XB was affiliated with the University of Oxford.

\bibliographystyle{alpha}
\bibliography{biblio}

\appendix
\section{Simple results for unweighted layered graph traversal}\label{ap:simpleResults}
This section discusses a few simple algorithms and bounds for traversing unweighted layered graphs. 
First, we give a tight bound on deterministic algorithms; then, we give a randomized lower bound and discuss two elementary randomized algorithms that fail to provide substantial competitive guarantees.

\begin{proposition}[Deterministic Tight Bound]
For every $w\in \Nbb$, the competitive ratio of deterministic unweighted layered graph traversal with width $w$ is exactly $2w-1$.
\end{proposition}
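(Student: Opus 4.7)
The plan is to prove the claim via matching upper and lower bounds.

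For the upper bound, I would analyze depth-first search (DFS): at each step, the agent moves to an arbitrary unvisited child in the next layer if one exists, and otherwise backtracks to its parent. When DFS first reaches layer $t$, every edge it has traversed lies in the subtree of nodes at depths $0$ through $t$, which contains at most $wt$ edges since each of the $t$ non-root layers has at most $w$ nodes. In DFS, a traversed edge is crossed exactly once if it lies on the current root-to-agent path (contributing $t$ to the cost) and exactly twice otherwise. Writing $E \le wt$ for the number of distinct edges traversed, the total cost is $t + 2(E - t) \le (2w - 1)t$.

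For the lower bound (the case $w = 1$ is trivial), I would construct, against any deterministic algorithm, an adaptive adversarial tree of width $w$. The root has $w$ children $v_1, \ldots, v_w$, and for $2 \le s \le D$ each $v_i$ extends to a single descendant at layer $s$, producing $w$ parallel \emph{threads}. Then, for rounds $k = 1, \ldots, w - 1$: immediately after the agent arrives at layer $D + k - 1$ on some still-alive thread, the adversary kills that thread by giving its layer-$(D+k-1)$ endpoint no child, while each other still-alive thread is extended by one child at layer $D + k$. After $w - 1$ kills, one thread remains and the agent has reached layer $D + w - 1$.

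Each round-$k$ kill forces the agent to travel from the newly dead thread's endpoint at layer $D + k - 1$ to a layer-$(D + k)$ node on a surviving thread; since distinct threads share only the root as common ancestor, this distance is exactly $2(D + k - 1) + 1$. Adding the initial descent of cost $D$, the total cost is $D + \sum_{k=1}^{w-1}(2(D+k-1) + 1) = (2w - 1) D + (w-1)^2$, while the depth reached is $D + w - 1$. The ratio $\tfrac{(2w-1)D + (w-1)^2}{D + w - 1}$ tends to $2w - 1$ as $D \to \infty$, matching the upper bound.

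The main obstacle is justifying that this construction works against an arbitrary (not necessarily DFS-like) deterministic algorithm; this is fine because the algorithm's response at each step is a deterministic function of the revealed tree, so the adversary may simulate it before committing to the next layer. A minor bookkeeping check is the width constraint: layers $1, \ldots, D$ contain exactly $w$ nodes (one per thread), and each subsequent kill can only decrease the number of live threads, so width never exceeds $w$.
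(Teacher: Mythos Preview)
Your proof is correct and follows essentially the same approach as the paper: the upper bound via DFS (each of the at most $wt$ tree edges is crossed at most twice, minus $t$ for the current root-to-agent path), and the lower bound via a star of $w$ chains where the adversary always kills the chain the agent currently occupies. Your construction with parameter $D$ is exactly the paper's star instance with chain lengths $t-w+1,\dots,t$ under the substitution $D=t-w+1$; you are just more explicit about the adaptive nature of the adversary, which the paper leaves implicit in the phrase ``the $i$-th chain $\mathcal{A}$ would explore.''
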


The upper bound is achieved by a simple depth-first search. This algorithm uses the edges on the path to the target once, and other edges at most twice. Since the number of edges traversed is at most $w$ per layer, the competitive ratio is $2w-1$.

For the lower bound, consider a star-shaped graph $G_{\text{star}}$ comprising a root $r$ and $w$ chains starting from $r$, as illustrated by Figure~\ref{fig:star}. 
A deterministic algorithm must sequentially explore each chain's endpoint to find the chain with depth $t$. 
For any $\Acal$ and any depth $i$, we assign length $i - w + j$ to the $j$-th chain $\Acal$ would explore. 
This graph instance would make $\Acal$ traverse all $w$ chains, incurring a total cost of $(2w - 1)i-O(w^2)$. Letting $i\to\infty$ shows that no deterministic algorithm can have a competitive ratio less than $2w-1$.

\begin{figure}[htbp]
\centering
\begin{subfigure}[b]{0.30\textwidth}
    \centering
    \includegraphics[width=0.6\textwidth]{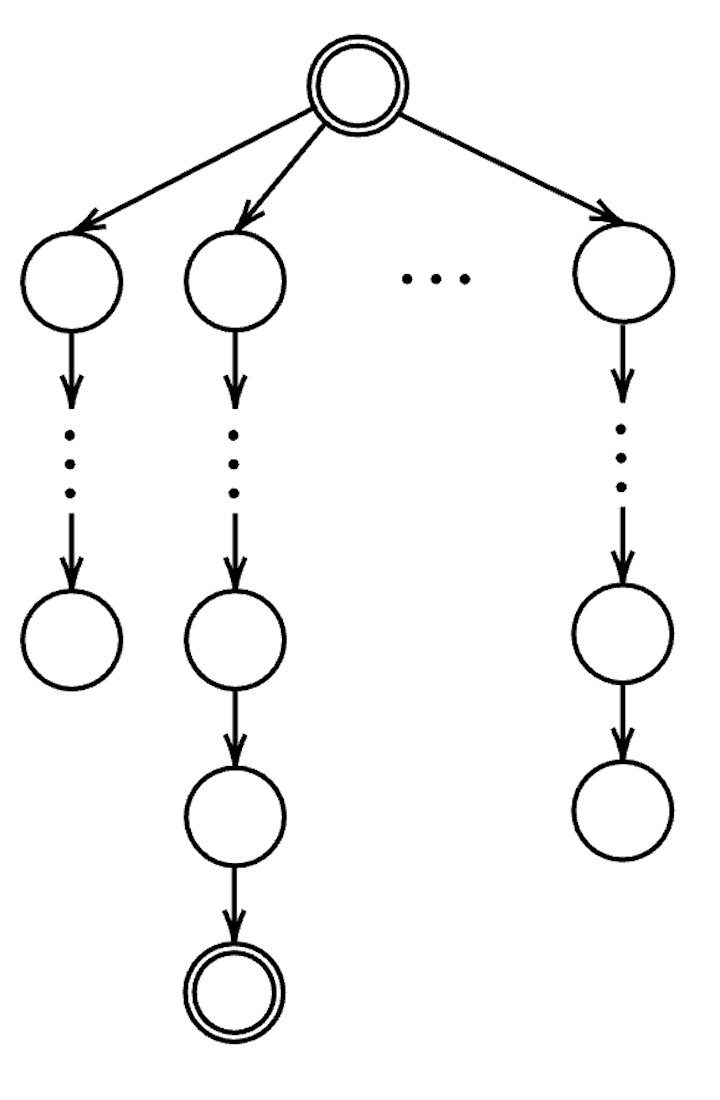}
    \caption{Star Graph}
    \label{fig:star}
\end{subfigure}
\hfill
\begin{subfigure}[b]{0.30\textwidth}
    \centering
    \includegraphics[width=0.6\textwidth]{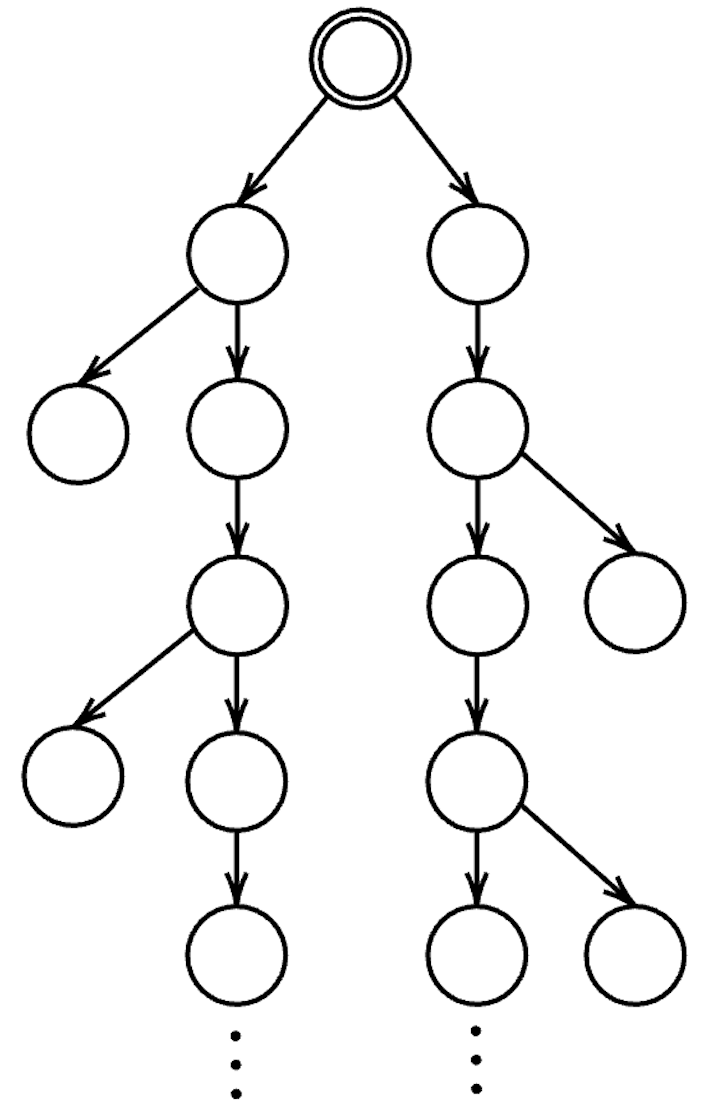}
    \caption{Alternating Branch Graph}
    \label{fig:branches}
\end{subfigure}
\hfill
\begin{subfigure}[b]{0.30\textwidth}
    \centering
    \includegraphics[width=0.85\textwidth]{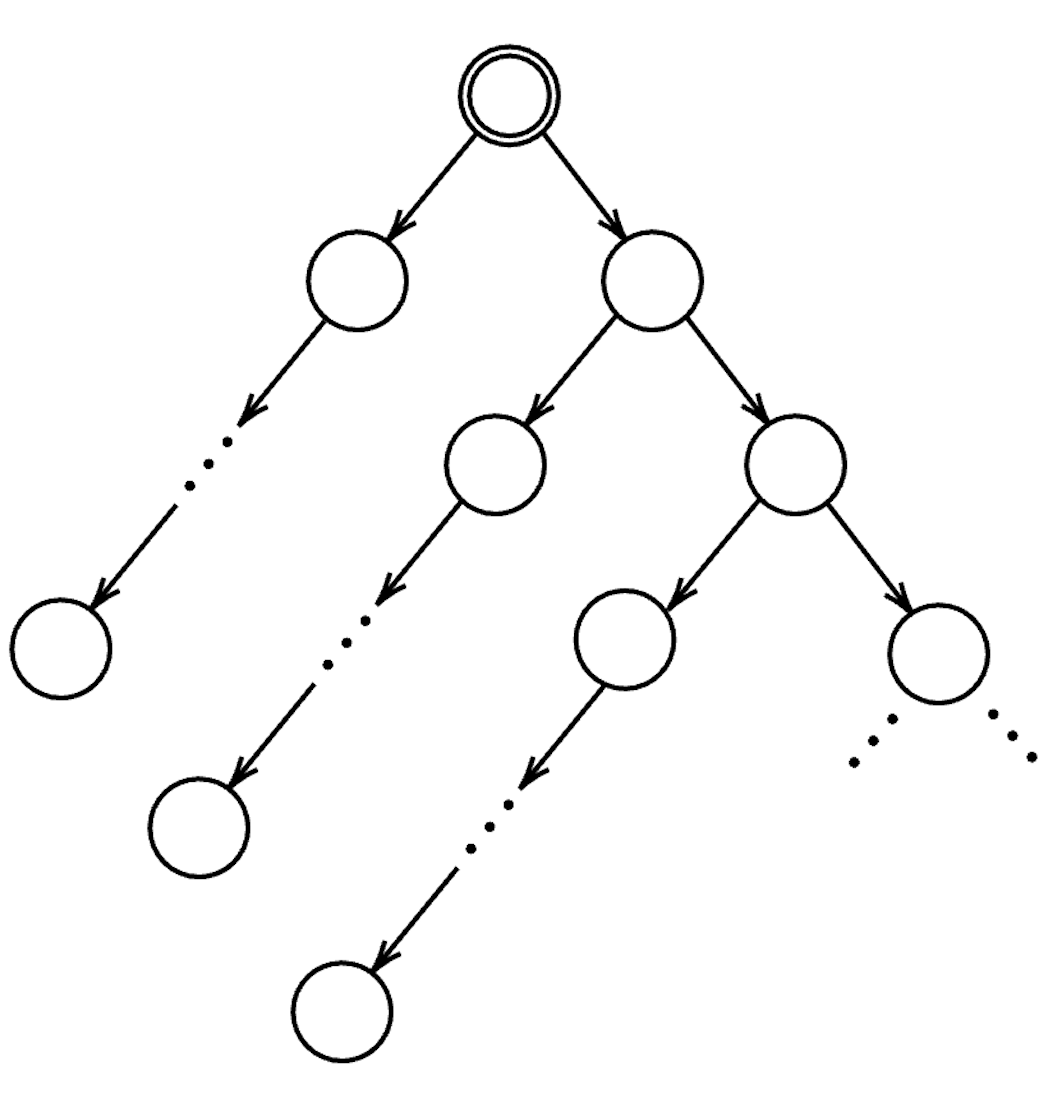}
    \caption{"Comb" Graph}
    \label{fig:comb}
\end{subfigure}
\caption{Three Instances of Unweighted Layered Graph.}
\label{fig:graphs-comparison}
\end{figure}

\begin{proposition}[Randomized Lower Bound]
Any randomized algorithm for unweighted layered graph traversal has a competitive ratio of $\Omega(\log w)$.
\end{proposition}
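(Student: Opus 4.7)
The plan is to apply Yao's minimax principle: I will exhibit a distribution over trees of width $w$ such that every deterministic algorithm has expected cost $\Omega(\log^2 w)$ while the offline optimum is $T := \lfloor \log_2 w \rfloor$, yielding the $\Omega(\log w)$ randomized lower bound. The random instance is built as follows: the root has $w$ children, each starting a chain indexed in $[w]$. At each transition $t\to t+1$ for $t = 1, \ldots, T-1$, the adversary kills a uniformly random subset consisting of half of the currently surviving chains (their last node, at depth $t$, becomes a leaf), while every remaining chain grows by one new node at depth $t+1$. After $T - 1$ halvings, a single chain survives to depth $T$; its endpoint is the target. The tree has width $w$ at layer $1$, depth $T$, and offline optimum $T$.

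Fix any deterministic algorithm and let $j_t \in [w]$ denote the index of the chain containing its position at layer $t$. The key observation is that $j_t$ is measurable with respect to the kills at transitions $1\to 2, \ldots, (t-1)\to t$—the only kills the algorithm has observed by the time it chooses its layer-$t$ position—and is therefore independent of the kill at transition $t \to t+1$. By symmetry of the uniform half-size subset, for any fixed chain in the live set (in particular chain $j_t$), the probability it lies in the killed subset is exactly $1/2$. If chain $j_t$ is killed, the algorithm must move to a depth-$(t+1)$ node on a different chain; since any two distinct chains share only the root as common ancestor, this movement has tree-distance $2t+1$ (up $t$ edges to the root, and down $t+1$ along the new chain). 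Otherwise the algorithm can simply descend along chain $j_t$ at cost $1$. The expected cost of transition $t\to t+1$ is therefore at least $\frac{1}{2}(2t+1)+\frac{1}{2}(1)=t+1$.

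Summing these lower bounds over $t = 1, \ldots, T-1$ (and adding the trivial cost $1$ at the first transition from the root) gives expected total cost $\Omega(T^2) = \Omega(\log^2 w)$. Dividing by the offline optimum $T$ yields competitive ratio $\Omega(\log w)$, and by Yao's principle the same bound applies to every randomized algorithm. The only nontrivial step is the independence/symmetry argument justifying the exact $1/2$ killing probability of the algorithm's chain at each transition; the rest is a routine summation. Minor technicalities such as $w$ not being a power of $2$ (handled by rounding the halvings, which makes the per-transition killing probability at least some constant) affect only constants in the bound.
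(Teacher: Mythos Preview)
Your proof is correct and takes a genuinely different route from the paper's. The paper argues directly against an arbitrary randomized algorithm via its fractional counterpart: on a star with $w$ chains of length roughly $t$, an \emph{adaptive} adversary deletes, at each of the last $w$ layers, the chain endpoint carrying the most fractional mass (at least $1/i$ when $i$ chains remain), forcing total cost $\Omega(t)\sum_{i=1}^{w}1/i = \Omega(t\log w)$ against $\text{OPT}=t$. You instead apply Yao's principle with an \emph{oblivious} random adversary on a short star of depth $T=\Theta(\log w)$, halving the set of live chains uniformly at each layer; the symmetry argument gives a clean $1/2$ kill probability for the deterministic algorithm's current chain and hence expected cost $\Omega(T^2)$ against $\text{OPT}=T$. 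Your argument is self-contained (no fractional reduction needed) and the hard distribution is algorithm-independent; the paper's argument has the advantage that the chain length $t$ can be taken arbitrarily large, so it also rules out ratios below $\Omega(\log w)$ under the common definition allowing an additive constant, whereas your instance has bounded depth $\Theta(\log w)$ and, as written, only addresses the strict definition $\Ebb[\Cost(t)]\le c(w)\,t$ used here. (A minor off-by-one: with $T=\lfloor\log_2 w\rfloor$ and $T-1$ halvings, two chains survive rather than one, but this is harmless since reaching the last layer still costs $T$.)
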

The proof is analogous to \cite[Theorem 12]{fiat1998competitive}. For any randomized algorithm, we consider its fractional counterpart $\Acal$. 
We apply $\Acal$ on the star-shaped graph $G_{\text{star}}$ as in Figure~\ref{fig:star}. Initially, each chain has length $i - w + 1$ for sufficiently large $i$.
At layer $i - w + 1$, at least one endpoint is occupied with probability at least $\frac{1}{w}$. We give that node no children in the next layer and extend the length of the remaining chains. We repeat this process on layers $i - w + 2$ to $i$.
Then, the expected cost incurred by $\Acal$ is $\Omega(i)$ times the sum, over each chain, of the probability that $\Acal$ visits its endpoint, which is $\sum_{j=1}^{w} \frac{1}{j} = \Omega(\log w)$.

To find efficient traversal algorithms, we examined various elementary ideas. Among them, none have demonstrated a competitive ratio below $\Omega(w)$, failing to improve on the deterministic DFS.

\paragraph{Random DFS Algorithm.}
Perhaps the most immediate (randomized) generalization of depth-first search is to consider the algorithm where the agent chooses the branch that it will explore next uniformly at random among the closest active branches, instead of always choosing one branch deterministically, which can be exploited by the adversary (as we did in the study of deterministic depth-first search). It is easy to see that the fractional perspective on this algorithm actually corresponds at all times to the configuration $\bx(i)$ that is defined by induction by $x_r =1$ and $\forall u \in V\setminus \{r\},~ x_u = x_{p(u)}/|c(p(u))|$ where $|c(p(u))|$ denotes the number of active children of $u$ at round $i$. We note that this method is $\Omega(w)$ competitive when applied to the ``Comb'' graph illustrated in Figure~\ref{fig:comb}. 
In this graph, each left-extending chain from the rightmost branch has length $w$, and the algorithm explores each dead end with a probability of one-half. 
Consequently, the total expected cost for this algorithm is $\Omega(w i)$ at the $i$-th layer.

A slightly more elaborate lower bound shows that the algorithm that moves, upon reaching a dead-end, to a random node that is at most some constant factor $L$ (possibly depending on $w$) further away than the closest active node in the new layer, also has competitive ratio $\Omega(w)$.

\paragraph{Uniform Algorithm.} The problem with the random depth-first search algorithm seems to be that it lets too much probability mass concentrate on a single leaf, and that the adversary is then encouraged to delete that precise leaf. One intuitive idea to overcome this issue is to consider the algorithm that maintains a uniform probability distribution on the active leaves instead (moving from one layer to the next using an optimal transport coupling). 
This algorithm fails in the case of a two-branch tree, as depicted in Figure~\ref{fig:branches}. 
The design of this tree causes the ultimate layer to alternate between having \{2, 1\} and \{1, 2\} nodes in both of its branches, incurring a movement cost of $\Omega(i)$ for each step due to the constant probability of switching branches. Hence, this algorithm has unbounded competitive ratio even for $w=3$.

\section{Equivalent expressions for 
\texorpdfstring{$\Phi$}{} 
}\label{ap: useful-identities}
The regularizer has two possible expressions, we give a detailed derivation here.
\begin{align*}
    \Phi(\bx) = \sum_{u\in V\setminus \{r\}}h_{p(u)}x_u\log\frac{x_u}{x_{p(u)}}
    &= \sum_{u\in V\setminus \{r\}}h_{p(u)}x_u\log x_u
    - \sum_{u\in V\setminus \{r\}}h_{p(u)}x_u\log x_{p(u)}, \nonumber\\ 
    &= \sum_{u\in V\setminus \{r\}}h_{p(u)}x_u\log x_u
    - \sum_{u\in V\setminus \Lcal}h_{u}x_u\log x_{u}, \nonumber\\
    \nonumber
    &= \sum_{u\in V\setminus \{r\}}(h_{p(u)}-h_u)x_u\log x_u,\\ 
    &=\sum_{u\in V} x_u\log x_u,
\end{align*}
where we used $\log x_r=\log1=0$ and $h_\ell=0$ for $\ell\in\Lcal$.

\section{Proof of the explicit algorithm and basic lemmas}
\label{ap: proof_of_explicit_and_basic}
In this section, we restate and prove the explicit algorithm (Lemma~\ref{lem:explicitAlgo}) and two fundamental lemmas regarding the continuous dynamics and movement cost (Lemma~\ref{lemma: dynamics} and \ref{lemma: mvt_cost}) of evolving configurations from Section~\ref{sec: explicit_and_basic}.

\begin{lemma}[Explicit algorithm]
    Let $\bgamma\in\R^\Lcal$ and
    \begin{align*}
       \bx=\argmin_{\bx\in K}\sum_{u\in V\setminus\{r\}}h_{p(u)}x_u\log\frac{x_u}{x_{p(u)}} + \sum_{\ell\in\Lcal}\gamma_\ell x_\ell.
    \end{align*}
    Then for each $u\in V\setminus\{r\}$,
    \begin{align}
        \frac{x_u}{x_{p(u)}} = \frac{y_u}{\sum\limits_{v\in c(p(u))}y_v},\label{eq: ap-explicit}
    \end{align}
    where $y_u$ is defined recursively via
    \begin{align*}
        y_u=\begin{cases}
            e^{-\gamma_u/h_{p(u)}}&\text{if }u\in\Lcal\\
            \left(\sum\limits_{v\in c(u)} y_v\right)^{h_u/h_{p(u)}}\quad&\text{if }u\notin\Lcal.
        \end{cases}
    \end{align*}
    In particular, if $\bgamma=0$, then $1\le y_u\le |\Lcal_u|$.
\end{lemma}
\begin{proof}
    Let
    \begin{align*}
        f(\bx)=\sum_{u\in V\setminus\{r\}}h_{p(u)}x_u\log\frac{x_u}{x_{p(u)}} + \sum_{\ell\in\Lcal}\gamma_\ell x_\ell.
    \end{align*}
    Since $f$ is strictly convex, $\bx$ is the unique point in $K$ satisfying the $\nabla f(\bx)\in -N_K(\bx)$. Thus, it suffices to show that the point $\bx$ defined via the ratios \eqref{eq: ap-explicit} satisfies these conditions. 
    
    Clearly, it satisfies $\bx\in K$. Moreover,
    \begin{align*}
        \partial_{x_u}f(\bx) &= h_{p(u)}\left(1+\log\frac{x_u}{x_{p(u)}}\right) - h_u\sum_{v\in c(u)}\frac{x_v}{x_u} + \gamma_u\one(u\in\Lcal)\\
        &= h_{p(u)}\left(1+\log y_u - \log\sum\limits_{v\in c(p(u))}y_v\right) - h_u + \gamma_u\one(u\in\Lcal)\\
        &= h_{p(u)}+h_{p(u)}\log y_u - h_{p(p(u))}\log y_{p(u)} - h_u + \gamma_u\one(u\in\Lcal).\\
    \end{align*}
Thus, letting $\lambda_u= h_u-h_{p(u)}\log y_u$ for $u\in V\setminus\Lcal$ and $\lambda_\ell=0=-\gamma_\ell+h_\ell-h_{p(\ell)}\log y_\ell$ for $\ell\in\Lcal$ (using $h_\ell=0$), we have
    \begin{align*}
        \nabla f(x) = \sum_{u\in V\setminus\{r\}}(\lambda_{p(u)}-\lambda_u)\be_u\in -N_K(x).
    \end{align*}
    If $\bgamma=0$, the inequalities that pertain to $y_u$ follow from the recursive definition by induction.
\end{proof}

\begin{lemma}[Continuous dynamics]
    Let $\bgamma\colon [0,\infty)\to \R^\Lcal$ and $x\colon[0,\infty)\to K$ be continuously differentiable and satisfy $x_u(t)>0$ for all $u\in V$, $t\in[0,\infty)$ and
\begin{equation}
    \nabla^2 \Phi(\bx(t))\dot\bx(t) +\sum_{\ell \in \Lcal}\gamma_\ell(t)\be_\ell  \in -N_{K} \label{eq: ap-MDcontainment}
\end{equation}
for all $t\in[0,\infty)$. Then for each node $u\in V\setminus\{r\}$,
    \begin{equation*}
        \partial_t\frac{x_u}{x_{p(u)}} = \frac{1}{h_{p(u)}x_{p(u)}}\left( \frac{x_u}{x_{p(u)}}\sum_{\ell \in \Lcal_{p(u)}}\!\!\!x_\ell\gamma_\ell \quad -\quad \sum_{\ell \in  \Lcal_u} x_\ell\gamma_\ell\right).
    \end{equation*}
\end{lemma}
\begin{proof}
    The first and second order derivatives of $\Phi(\bx)=\sum_{u\in V\setminus\{r\}}h_{p(u)}x_u\log\frac{x_u}{x_{p(u)}}$ are given\footnote{Note that the derivatives of the function $\sum_{u\in V}x_u\log x_u$ would be different, even though these functions are equal on $K$. The reason is that to calculate the derivatives, we need to consider $\Phi$ on an open neighborhood of $K$, such as $\R_+^V$, where the two definitions of $\Phi$ are no longer equivalent. For the same reason, replacing $\sum_{v\in c(u)}\frac{x_v}{x_u}$ by $1$ in the formula for the first-order derivative and calculating the second-order derivatives based on this would yield an incorrect Hessian. Both formulations yield the same dynamics though.} by
    \begin{align*}
        \partial_{x_u}\Phi(\bx) &= h_{p(u)}\left(1+\log\frac{x_u}{x_{p(u)}}\right) - h_u\sum_{v\in c(u)}\frac{x_v}{x_u}\\
        \partial_{x_ux_{u}}\Phi(\bx) &= \frac{h_{p(u)}}{x_u}+ h_u\sum_{v\in c(u)}\frac{x_v}{x_u^2}\\
        \partial_{x_ux_{p(u)}}\Phi(\bx) = \partial_{x_{p(u)}x_u}\Phi(\bx) &= -\frac{h_{p(u)}}{x_{p(u)}},
    \end{align*}
    with all other second-order derivatives being zero. Elements of the normal cone $N_K$ are of the form $\sum_{u\in V\setminus\{r\}}(\lambda_{u}-\lambda_{p(u)})\be_u$, where $\lambda_u\in\R$ is a Lagrange multiplier corresponding to the constraint $x_u= \sum_{v\in c(u)}x_v$ for $u\in V\setminus \Lcal$, and $\lambda_\ell=0$ for $\ell\in\Lcal$.
    Equation \eqref{eq: ap-MDcontainment} then becomes
    \begin{align*}
        \nabla^2 \Phi(\bx(t))\dot\bx(t) = - \sum_{\ell \in \Lcal}\gamma_\ell(t)\be_\ell  + \sum_{u\in V\setminus \{r\}}(\lambda_{p(u)}(t)-\lambda_u(t))\be_u
    \end{align*}
    for some $\lambda\colon[0,\infty)\to\R^V$.
    
    Extending the co-domain of $\bgamma$ from $\R^\Lcal$ to $\R^V$ by letting $\gamma_u(t)=\lambda_u(t)$ for $u\in V\setminus\Lcal$, and using $\lambda_\ell=0$ for $\ell\in\Lcal$ the row corresponding to $u$ can be written as
    \begin{align}
        \gamma_{p(u)}-\gamma_u &= \partial_{x_ux_u}\Phi(\bx)\dot x_u + \partial_{x_ux_{p(u)}}\Phi(\bx)\dot x_{p(u)} + \sum_{v\in c(u)}\partial_{x_ux_v}\Phi(\bx)\dot x_v\notag\\
        &= \left(\frac{h_{p(u)}}{x_u}+ h_u\sum_{v\in c(u)}\frac{x_v}{x_u^2}\right)\dot x_u - \frac{h_{p(u)}}{x_{p(u)}}\dot x_{p(u)} - \frac{h_u}{x_u}\sum_{v\in c(u)}\dot x_v\notag\\
        &= \frac{h_{p(u)}}{x_u}\dot x_u - \frac{h_{p(u)}}{x_{p(u)}}\dot x_{p(u)}\notag\\
        &= \frac{h_{p(u)}x_{p(u)}}{x_u}\partial_t\frac{x_u}{x_{p(u)}},\label{eq:gammaDiff}
    \end{align}
    where the penultimate equation uses $\sum_{v\in c(u)}x_v=x_u$ and $\sum_{v\in c(u)}\dot x_v=\dot x_u$ for $u\in V\setminus\Lcal$. To conclude the lemma, it suffices to show
    \begin{align*}
        \gamma_u = \sum_{\ell\in\Lcal_{u}}\frac{x_\ell\gamma_\ell}{x_u}.
    \end{align*}
    We show this by induction on the height of $u$. For $u\in \Lcal$ this is immediate. For $u\in V\setminus\Lcal$, it is implied by the following chain of equations, where we use \eqref{eq:gammaDiff} and the induction hypothesis:
    \begin{align*}
        0 = h_u\sum_{v\in c(u)}\partial_t\frac{x_v}{x_u}
        = \sum_{v\in c(u)}\frac{x_v}{x_{u}}(\gamma_u-\gamma_v) = \gamma_u - \sum_{v\in c(u)}\frac{x_v}{x_{u}}\sum_{\ell\in\Lcal_v}\frac{x_\ell\gamma_\ell}{x_v} = \gamma_u - \sum_{\ell\in\Lcal_{u}}\frac{x_\ell\gamma_\ell}{x_u}.
    \end{align*}
\end{proof}

\begin{lemma}[Movement cost]
    The instantaneous cost during continuous movement is at most
    \begin{equation*}
        2\sum_{u\in V \setminus \{r\}} h_{p(u)}x_{p(u)}\left(\partial_t \frac{x_u}{x_{p(u)}}\right)_-,
    \end{equation*}
    where we write $(z)_-=\max\{-z,0\}$.
\end{lemma}
\begin{proof}
Consider an infinitesimal time step, where the algorithm moves from configuration $\bx(t)$ to configuration $\bx(t+dt)$. Rather than going from $\bx(t)$ to $\bx(t+dt)$ directly, one could break the movement into pieces, where the algorithm goes through the vertices $u\in V\setminus\Lcal$ one by one (bottom-up, say) and only updates the ``conditional probabilities'' $\frac{x_v}{x_u}$ for $v\in c(u)$. Since these conditional probabilities fully specify $\bx$, the algorithm still reaches the same configuration $\bx(t+dt)$ in the end. Hence, the combined cost of these movements upper bounds the optimal transport cost from $\bx(t)$ to $\bx(t+dt)$.

Consider now a step where only the conditional probabilities of children of a single vertex $u$ are updated while other conditional probabilities are held constant. The instantaneous cost of this movement is
\begin{align*}
    2h_ux_u\sum_{v\in c(u)}\left(\partial_t\frac{x_v}{x_{u}}\right)_-
\end{align*}
since $x_u\left(\partial_t\frac{x_v}{x_{u}}\right)_-$ is the rate at which mass is moving from leaves in $\Lcal_v$ upwards to $u$ by a distance $h_u$, and by conservation of mass it is moving downwards the same distance to leaves below siblings of $v$. Summing over all $u\in V\setminus\Lcal$ yields the lemma.
\end{proof}

\end{document}